%
\documentclass[runningheads]{llncs}
\usepackage[T1]{fontenc}
%
\usepackage{graphicx}
\usepackage{amssymb, amsmath}
\usepackage{mathtools}

\usepackage{thmtools}
\usepackage{thm-restate} 
\usepackage{cite}
\usepackage{hyperref}
\usepackage[capitalize,noabbrev,nosort]{cleveref}
\usepackage{subcaption}
\usepackage{tikz}
\usetikzlibrary{automata,arrows,calc}
\tikzset{
	state/.append style={semithick,initial text=,inner sep=0,outer sep=0,minimum size=1.5em},
	transition/.append style={->,>=stealth',shorten >=1pt,semithick},
	transition label/.append style={outer sep=3pt},
	every initial by arrow/.append style={initial text=,initial where=,transition},
}

%
\usepackage{hyperref}
\usepackage{enumitem}
\usepackage{color}

%

\DeclareMathOperator{\GL}{GL}

\DeclareMathOperator{\diag}{diag}
\DeclareMathOperator{\basin}{basin}
\DeclareMathOperator{\fun}{fun}

\DeclareMathOperator{\lcm}{lcm}
\DeclareMathOperator{\ord}{ord}
\DeclareMathOperator{\Per}{\textnormal{\textsf{Per}}}
\DeclareMathOperator{\Divs}{\textnormal{\textsf{Div}}}
\DeclareMathOperator{\pre}{pre}
\DeclareMathOperator{\rank}{rank}

\newcommand{\C}{\mathbb{C}}

\newcommand{\N}{\mathbb{N}}
\newcommand{\Q}{\mathbb{Q}}
\newcommand{\R}{\mathbb{R}}
\newcommand{\Z}{\mathbb{Z}}

\newcommand{\dfao}{\textsc{DFAO}}
\newcommand{\n}{_{n \ge 0}}
\newcommand{\e}{\boldsymbol{e}}
\newcommand{\s}{\boldsymbol{s}}

\newcommand{\ceil}[1]{\left\lceil #1 \right\rceil}
\newcommand{\size}[1]{\lvert #1 \rvert}

\newcommand{\seq}[1]{\cite[\href{http://oeis.org/#1}{#1}]{OEIS}}


\begin{document}
\title{Magic Numbers in Periodic Sequences}
%
%
\author{
Savinien Kreczman\inst{1}\orcidID{0000-0002-1928-0028} \and \\
Luca Prigioniero\inst{2}\orcidID{0000-0001-7163-4965} \and \\
Eric Rowland\inst{3}\orcidID{0000-0002-0359-8381} \and \\
Manon Stipulanti\inst{1}\thanks{Corresponding author}\orcidID{0000-0002-2805-2465}}
\authorrunning{S.~Kreczman et al.}
%
\institute{
	Department of Mathematics, University of Liège, Liège, Belgium
	\email{$\{$savinien.kreczman,m.stipulanti$\}$@uliege.be}
	\and
	Dipartimento di Informatica, Universit\`a degli Studi di Milano, Milan,  Italy
	\email{prigioniero@di.unimi.it} 
	\and
	Department of Mathematics, Hofstra University, Hempstead,  New York, USA
	\email{eric.rowland@hofstra.edu}
}
\maketitle              

\begin{abstract}
	In formal languages and automata theory,  the magic number problem can be formulated as follows:
	for a given integer $n$,
	is it possible to find a number~$d$ in the range~$[n,2^n]$ such that there is no minimal deterministic finite automaton with~$d$ states that can be simulated by a minimal nondeterministic finite automaton with exactly~$n$ states?
	If such a number $d$ exists, it is called magic.
	In this paper,
	we consider the magic number problem in the framework of deterministic automata with output,
	which are known to characterize automatic sequences.
	More precisely,
	we investigate magic numbers for periodic sequences viewed as either automatic, regular, or constant-recursive.

	\keywords{Magic numbers \and Periodic sequences \and Automatic sequences \and Regular sequences  \and Constant-recursive sequences}
\end{abstract}
\section{Introduction}\label{section: introduction}

The magic number problem has received wide attention in
formal languages and automata theory
since its recent formulation~\cite{MagicNumber1,MagicNumber2}.
In this setting,
it is well known that,
given a nondeterministic finite automaton with~$n$ states, 
it is always possible to obtain an equivalent deterministic finite automaton,
for example by applying the \emph{powerset construction}.
The number of states of the resulting machine may vary between $n$ and $2^n$,
and the upper bound cannot be reduced in the worst case~\cite{MF71}.
Therefore,
it is natural to ask whether,
given a number~$n$,
it is possible to find a number~$d$ satisfying~$n\leq d \leq 2^n$,
such that there is no minimal deterministic finite automaton with exactly~$d$ states
that can be simulated by an optimal nondeterministic finite automaton with exactly~$n$ states.
If such a number~$d$ exists, it is called~\emph{magic} for~$n$.
This problem has been studied for finite automata with one- and two-letter alphabets
(for which magic numbers have been found~\cite{MagicNumber1,MagicNumber2,Gef07}), 
for larger alphabets (for which it has been proved that
no magic number exists for each $n\ge 0$~\cite{Jir06,Gef05,Jir11}), 
and for some special cases~\cite{HJK12}.
The formulation of the magic number problem has been adapted to the cost of operations on regular languages~\cite{Jir08}.
In that case, the goal is to determine if the whole range of sizes between the lower and the upper bounds can be obtained when applying a given language operation.

\begin{example}
	\label{ex: magic numbers NFA}
	The nondeterministic finite automaton~$\mathcal{A}$ in \cref{fig: ex: magic numbers NFA} has~$5$ states
	and its corresponding minimal deterministic finite automaton,  which can be obtained by minimizing the automaton produced when applying the powerset construction to~$\mathcal{A}$,  has~$2^5 = 32$ states.
	For each~$d \in\{ 5, \ldots, 32\}$,
	it is possible to modify~$\mathcal{A}$,
	by adding suitable transitions on the symbol~$c$,
	to obtain a corresponding minimal deterministic finite automaton with~$d$ states.
	Moreover,
	this example can be generalized to nondeterministic finite automata with~$n$ states,
	for every~$n\ge 0$,
	to prove that no magic numbers exist
	in the case of three-letter alphabets~\cite{Jir11}.

	\begin{figure}[tb]
		\centering
		\def\nodeDistance{5em}
		\begin{tikzpicture}[
			font=\scriptsize,
			node distance=\nodeDistance
			]
			\path[align=center]
			(0,0)
				node[state,initial,accepting, initial where=below]	(5) {}
			++(0:\nodeDistance)
				node[state] 					(4) {}
			++(0:\nodeDistance)	
				node[state]						(3) {}
			++(0:\nodeDistance)
				node[state]						(2) {}
			++(0:\nodeDistance)
				node[state]						(1) {}
				;

			\path[transition,near end, above]
				(1.-195) edge node[] {$a,b$} (2.15)
				(2) edge node[] {$a,b$} (3)
				(3) edge node[] {$a,b$} (4)
				(4) edge node[xshift=.4em] {$a,b,c$} (5)
			;
			\path[transition]
				(2.-15) edge node[near start,yshift=-1ex,xshift=-.5em] {$b$} (1.195)
				(3) edge[bend right=15] node[xshift=-1em,yshift=-.5ex,near start] {$b$} (1.-150)
				(4) edge[bend right=15] node[xshift=-1em,yshift=-.5ex,near start] {$b$} (1.-125)
				(5) edge[bend right=18] node[xshift=-1em,yshift=-.5ex,near start] {$b$} ($(1.-89)+(-1pt,-.5pt)$)
			;
			\path[transition]
				(1) edge[loop above=10] node[right,xshift=.25em,yshift=-1ex] {$b$} (1)
				(5) edge[loop above=10] node[left,xshift=-.25em,yshift=-1ex] {$b,c$} (1)
			;
		\end{tikzpicture}
		\caption{
			A nondeterministic finite automaton with~$5$ states
			having an equivalent minimal deterministic finite automaton with~$2^5$ states.
		}
		\label{fig: ex: magic numbers NFA}
	\end{figure}
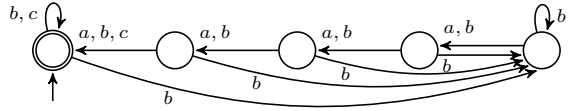
\end{example}

At first,  our intention was to investigate the magic number problem in the case of \emph{deterministic finite automata with output}.
It is known that these machines,  which are simple extensions of deterministic finite automata in which every state is associated with an output,  generate the class of \emph{automatic sequences}.
More precisely,  a sequence $s=s(n)\n$ is automatic if there is a deterministic finite automaton with output that,  for each integer $n$,  outputs $s(n)$ when the digits of $n$ are given as input~\cite[Chapter~5]{Allouche--Shallit-2003}.
First systematically studied by Cobham~\cite{Cobham72},  automatic sequences have become a central topic in combinatorics on words,  with applications in numerous fields,  especially number theory~\cite{Allouche--Shallit-2003}.
There is a well-established correspondence between a specific minimal deterministic finite automaton with output generating the sequence $s$ and the so-called \emph{kernel} of $s$,  which is a set of particular subsequences extracted from $s$.
Thus,  since we are interested in the number of states,  it is equivalent to study the kernel sizes of automatic sequences,  which is called the \emph{rank}.
Specifically,  given a period length $\ell$,  we want to find lower and upper bounds on the possible values of the rank for all periodic sequences with period length $\ell$.
Moreover,  we examine the magic number problem in this context: given an integer $\ell\ge 1$,  is it possible to find numbers~$d$ within the previous bounds such that there is a periodic sequence with period length $\ell$ generated by a deterministic finite automaton with output with exactly~$d$ states or, equivalently,  having rank exactly~$d$?
In some specific case,  we obtain a full characterization of magic numbers for periodic sequences.

We then extended this study to two other families of sequences,
namely \emph{constant-recursive} and \emph{regular}.
A sequence is constant-recursive if each of its elements can be obtained from a linear combination of the previous ones.
The rank of such a sequence is the smallest degree of all linear recurrences producing it.
Motivation to study constant-recursive sequences is that they appear in many areas of computer science and mathematics,
such as theoretical biology,  software verification,  probabilistic model checking, quantum computing,  and crystallography; for further details,  see \cite{Ouaknine-Worrell} or the fine survey~\cite{Rec-seq}, and references therein.
There are still challenging open problems involving these sequences.
For instance,  in general,  it is not known if it is decidable whether or not a constant-recursive sequence takes the value $0$. 
This question is referred to as the \emph{Skolem problem}.
For this class of sequences,  we characterize the corresponding magic numbers as sums of the Euler totient function evaluated at divisors of the period length of the sequence.

On the other hand,
a sequence $s$ is regular if the vector space generated by its kernel is finitely generated~\cite{Allouche--Shallit-1992}. 
In this class,  the rank of a sequence is the dimension of the vector space.
Every automatic sequence,  necessarily on a finite alphabet,  is regular.
In that sense,  the notion of regular sequences generalizes that of automatic sequences to infinite alphabets and thus provides a wider framework for the study of sequences.
Regular sequences naturally arise in many fields,  such as analysis of algorithms, combinatorics, number theory,  numerical analysis, and the theory of fractals~\cite{Allouche--Shallit-1992}.
In some specific case,  we show that the regular case actually boils down to the constant-recursive one.

The paper is organized as follows.
In \cref{sec: prelim}, we introduce the necessary preliminaries.
A specific section is then dedicated to each class of sequences:
namely \cref{section: automatic sequences} to automatic sequences,
\cref{sec: constant-recursive sequences} to constant-recursive sequences,
and \cref{sec: regular sequences} to regular sequences.
We chose to provide the necessary background for each class of sequences in the corresponding section.

\section{Preliminaries}\label{sec: prelim}

Within a family $\mathcal{F}$ of sequences,
we will consider a property $\mathrm{P}_\mathcal{F}$ on numbers.
We say that a number is \emph{magic} if it does not satisfy $\mathrm{P}_\mathcal{F}$;
otherwise,
adopting the terminology from~\cite{Gef07},
it is called \emph{muggle}.
The \emph{range for $\mathrm{P}_\mathcal{F}$} is the smallest interval containing all muggle numbers with respect to $\mathrm{P}_\mathcal{F}$.

The \emph{period length} of a periodic sequence $s$ is the smallest integer $\ell \ge 1$ satisfying $s(n + \ell) = s(n)$ for all $n \ge 0$.
Let $\Per(\ell)$ denote the set of such periodic sequences with period length~$\ell$.
The \emph{period} of a sequence $s \in \Per(\ell)$ is the tuple~$(s(0), s(1), \dots, s(\ell - 1))$.
Note that,  in the following,  we discard the case~$\ell=1$,  which corresponds to constant sequences and for which the three families give a corresponding rank equal to~$1$.

For a periodic sequence~$s$,
subsequences of the form~$s(c n + r)\n$ are also periodic,
and their period lengths satisfy the following.

\begin{restatable}{proposition}{SubseqPeriodic} 
	\label{pro: subsequence period length}
	Let $\ell\ge 2$, $c,r\ge 0$ be integers.
	If $s$ is a periodic sequence whose period length divides $\ell$,
	then $s(c n + r)\n$ is periodic with period length dividing~$\frac{\ell}{\gcd(c, \ell)}$.
\end{restatable}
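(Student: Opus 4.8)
The plan is to exhibit an explicit period of the subsequence and then invoke the elementary fact that the period length of a periodic sequence divides every one of its periods. Write $g = \gcd(c,\ell)$ and set $P = \ell/g$. Since the period length of $s$ divides $\ell$, the integer $\ell$ is itself a period of $s$, that is $s(m+\ell) = s(m)$ for all $m \ge 0$, and hence so is every multiple of $\ell$.

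First I would compute, for an arbitrary $n \ge 0$,
\[
c(n+P)+r = (cn+r) + cP = (cn+r) + \frac{c}{g}\,\ell .
\]
Because $g = \gcd(c,\ell)$ divides $c$, the factor $c/g$ is a nonnegative integer, so $cP = (c/g)\,\ell$ is a multiple of $\ell$. Applying the periodicity of $s$ then yields $s(c(n+P)+r) = s(cn+r)$ for all $n \ge 0$. In particular the subsequence $t$ defined by $t(n) = s(cn+r)$ is periodic and admits $P$ as a period.

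It remains to deduce that the period length of $t$ divides $P$. For this I would prove (or cite as folklore) the general lemma: if a sequence has period length $q$ and $P$ is any of its periods, then $q \mid P$. The argument is the usual division-with-remainder one: writing $P = kq + j$ with $0 \le j < q$, and using that $q$ (hence $kq$) and $P$ are both periods, one gets $t(n+j) = t(n+P) = t(n)$ for all $n \ge 0$, so $j$ is a period strictly smaller than $q$; minimality of $q$ forces $j = 0$. Applied to $t$ and the period $P = \ell/\gcd(c,\ell)$, this gives exactly the claimed divisibility.

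I expect no serious obstacle here; the only points requiring care are the divisibility lemma, which is precisely where the minimality built into the definition of period length is used, and the degenerate case $c = 0$, where $\gcd(0,\ell) = \ell$, $P = 1$, and $t$ is the constant sequence equal to $s(r)$ of period length $1$, a case that is in fact already subsumed by the same computation since $cP = 0$ is a multiple of $\ell$.
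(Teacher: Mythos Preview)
Your proof is correct and follows essentially the same approach as the paper's: both compute $c(n+P)+r=(cn+r)+\frac{c}{\gcd(c,\ell)}\ell$ and use that $\frac{c}{\gcd(c,\ell)}$ is an integer to conclude that $P=\ell/\gcd(c,\ell)$ is a period of the subsequence. The only difference is that you spell out the folklore lemma that the period length divides every period, which the paper leaves implicit in its ``which suffices''.
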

\begin{proof}
\label{proof: pro: subsequence period length}
For all $n \ge 0$, we have
\[
	s\!\left(c \left(n + \frac{\ell}{\gcd(c, \ell)}\right) + r\right)
	= s\!\left(c n + \frac{c}{\gcd(c, \ell)} \ell + r\right)
	= s(c n + r)
\]
since $\frac{c}{\gcd(c, \ell)}$ is an integer and the period length of $s$ divides $\ell$,  which suffices.
\qed
\end{proof}

\section{Automatic Sequences}\label{section: automatic sequences}
In this section, we are interested in representations of periodic sequences as automatic sequences.
We briefly mention earlier works concerning \emph{periodicity} and \emph{automaticity}.
\emph{Eventual periodicity} and \emph{automatic sets} have been considered in~\cite{Honkala-1986, Boigelot--Mainz--Marsault--Rigo-2017, Marsault-2019}.
For $\ell\ge 1$,  Alexeev~\cite{Alexeev} and Sutner~\cite{Sutner} studied the automata generating the sequence with period~$(1,0,0,\ldots,0)$,
with $\ell-1$ trailing zeroes.
Sutner and Tetruashvili determined the rank of the sequence with period $(0, 1, \dots, \ell - 1)$ for specific values of~$\ell \ge 1$~\cite[Claim~2.1]{Sutner--Tetruashvili}.
Bosma studied some particular cases of several families of periodic sequences on a binary alphabet and considered upper and lower bounds on the rank~\cite{Bosma},
while Zantema gave an upper bound on the rank of all periodic sequences~\cite[Theorem~10]{Zantema}.

We now introduce the definitions for this section.
Let $k,\ell\ge 2$ be integers.
Define $\ord_\ell(k)$ to be the smallest integer $m\ge 1$ such that $k^{e+m} \equiv k^e \mod{\ell}$ for all sufficiently large $e$.
In the special case when~$k$ and~$\ell$ are coprime, $\ord_\ell(k)$ is the usual multiplicative order of $k$ modulo $\ell$.
Let $\pre_\ell(k)$ denote the smallest integer $e\ge 0$ such that $k^{e+\ord_\ell(k)} \equiv k^e \mod{\ell}$.
In other words, $\pre_\ell(k)$ is the length of the preperiod of the sequence $(k^e \bmod \ell)_{e \ge 0}$,
and $\ord_\ell(k)$ is the (eventual) period length.
For every integer $n$, we let $\Divs(n)$ be the set of divisors of $n$.

Given an integer $k$,
a sequence $s$ is \emph{$k$-automatic} if there is a deterministic finite automaton with output (\dfao) that,
for each integer~$n$,  outputs $s(n)$ when fed with the base-$k$ digits of $n$.
To tie the automaton to the base, we call it a \emph{$k$-automaton}. 
Here,
automata read \mbox{base-$k$} representations beginning with the least significant digit
(note that the other reading direction produces the same set of sequences).
There is also a characterization of automatic sequences in terms of the so-called \emph{kernel}. 
In particular,
the \emph{$k$-kernel} of a sequence $s$ is the set of subsequences $\ker_k(s)=\{s(k^e n+ j )\n \colon \text{$e \ge 0$ and $0\le j \le k^e-1$} \}$.
A sequence is $k$-automatic if and only if its $k$-kernel is finite.
We define the \emph{$k$-rank}
(or simply the \emph{rank} when the context is clear)
of a~{$k$-automatic} sequence $s$ to be the quantity $\rank_k(s) = \size{\ker_k(s)}$.
In fact, $\ker_k(s)$ is in bijection with the set of states in the minimal $k$-automaton that generates $s$ with the property that leading $0$'s do not affect the output.
Every periodic sequence is $k$-automatic for every $k \ge 2$~\cite[p.~88]{Buchi}.
Note that,
in this case,
the alphabet does not matter,
so,
without loss of generality,
the sequences in this section will have nonnegative integer values.
For further details on automatic sequences see, e.g.,~\cite{Allouche--Shallit-2003}.

\begin{example}\label{ex: 2-6 kernel}
	For the sequence $s\in\Per(6)$ with period $(0,1,2,3,4,5)$,
	we illustrate the relationship between the minimal $2$-automaton for $s$ and its $2$-kernel in \cref{fig: DFAO intro example} and \cref{tab:example-kernel-aut}.
\begin{table}[h]
	\caption{
		For $s\in\Per(6)$ with period $(0,1,2,3,4,5)$,
		its $2$-kernel contains  $13$ distinct sequences shown in the second column.
		Two pairs $(e,j)$ and $(e',j')$ are equivalent if and only if  $s(k^e n+ j )\n=s(k^{e'} n+ j')\n$.
		Therefore,  in the third column,  the state in the $2$-automaton for $s$ of \cref{fig: DFAO intro example} is labeled $ej$ where $(e,j)$ is the lexicographically smallest representative for this equivalence relation.%
	}
	\label{tab:example-kernel-aut}
	\[
		\begin{array}{l|l|c}
			(e,j) & s((2^e \bmod 6) n + (j \bmod 6))\n & \text{state} \\
			\hline
			(0,0)						& s(n)\n = 0, 1, 2, 3, 4, 5, 0, 1, 2, 3, 4, 5,\ldots& 00 \\
			(1,0), (3,0), (3,6)			& s(2n)\n = 0, 2, 4, 0, 2, 4, \ldots				& 10	\\
			(1,1), (3,1), (3,7)			& s(2n+1)\n = 1, 3, 5, 1, 3, 5, \ldots				& 11	\\
			(2,0), (4,0), (4,6), (4,12) & s(4n)\n = 0, 4, 2, 0, 4, 2, \ldots				& 20	\\
			(2,1), (4,1), (4,7), (4,13) & s(4n+1)\n = 1, 5, 3, 1, 5, 3, \ldots				& 21	\\
			(2,2), (4,2), (4,8), (4,14) & s(4n+2)\n = 2, 0, 4, 2, 0, 4, \ldots				& 22	\\
			(2,3), (4,3), (4,9), (4,15) & s(4n+3)\n = 3, 1, 5, 3, 1, 5, \ldots				& 23	\\
			(3,2) 						& s(2n+2)\n = 2, 4, 0, 2, 4, 0, \ldots				& 32	\\
			(3,3) 						& s(2n+3)\n = 3, 5, 1, 3, 5, 1, \ldots				& 33	\\
			(3,4) 						& s(2n+4)\n =4, 0, 2, 4, 0, 2, \ldots				& 34	\\
			(3,5) 						& s(2n+5)\n = 5, 1, 3, 5, 1, 3, \ldots				& 35\\
			(4,4), (4,10) 				& s(4n+4)\n = 4, 2, 0, 4, 2, 0, \ldots				& 44\\ 
			(4,5), (4,11) 				& s(4n+5)\n = 5, 3, 1, 5, 3, 1, \ldots				& 45\\
		\end{array}
	\]
\end{table}
	\begin{figure}[h]
		\centering
		\def\nodeDistance{5em}
		\begin{tikzpicture}[
			font=\scriptsize,
			inner sep=.7ex,
			node distance=\nodeDistance
			]
			\path[text width={width("$45 / 5$")},align=center]
				(0,0) 					node[state,initial,initial where=below](0) {$00/ 0$}
				++(-42:\nodeDistance)node[state]	(1) {$10 / 0$}
				++(30:\nodeDistance)	node[state]	(3) {$20 / 0$}
				++(0:\nodeDistance)		node[state] (9)	{$34 / 4$}
				++(-30:\nodeDistance)	node[state] (11){$44 / 4$}

				(1)
				++(-30:\nodeDistance)	node[state]	(5) {$22 / 2$}
				++(0:\nodeDistance)		node[state] (7)	{$32 / 2$}

				(0)
				++(180+42:\nodeDistance)	node[state] (2) {$11 / 1$}	
				++(180-30:\nodeDistance)	node[state]	(4) {$21 / 1$}
				++(180:\nodeDistance)		node[state] (10){$35 / 5$}
				++(180+30:\nodeDistance)	node[state] (12){$45 / 5$}

				(2)
				++(180+30:\nodeDistance)	node[state]	(6) {$23 / 3$}
				++(180:\nodeDistance)		node[state] (8)	{$33 / 3$}
				;

			\path[transition]
				(0) edge node[above] {$0$} (1)
				(0) edge node[above] {$1$} (2);

			\path[transition,every node/.append style={near end}]
				(1) edge[bend left=10] node[above] {$0$} (3)
				(1) edge[bend left=10] node[above] {$1$} (5)

				(3) edge[bend left=10] node[below] {$0$} (1)
				(3) edge[bend left=10] node[above] {$1$} (9)
				
				(5) edge[bend left=10] node[above] {$0$} (7)
				(5) edge[bend left=10] node[below] {$1$} (1)
				
				(7) edge[bend left=10] node[below] {$0$} (5)
				(7) edge[bend left=10] node[above] {$1$} (11)
				
				(9) edge[bend left=10] node[above] {$0$} (11)
				(9) edge[bend left=10] node[below] {$1$} (3)
				
				(11) edge[bend left=10] node[below] {$0$} (9)
				(11) edge[bend left=10] node[below] {$1$} (7)

				(2) edge[bend left=10] node[below] {$0$} (4)
				(2) edge[bend left=10] node[below] {$1$} (6)

				(4) edge[bend left=10] node[above] {$0$} (2)
				(4) edge[bend left=10] node[below] {$1$} (10)
				
				(6) edge[bend left=10] node[below] {$0$} (8)
				(6) edge[bend left=10] node[above] {$1$} (2)
				
				(8) edge[bend left=10] node[above] {$0$} (6)
				(8) edge[bend left=10] node[below] {$1$} (12)
				
				(10) edge[bend left=10] node[below] {$0$} (12)
				(10) edge[bend left=10] node[above] {$1$} (4)
				
				(12) edge[bend left=10] node[above] {$0$} (10)
				(12) edge[bend left=10] node[above] {$1$} (8)
			;
		\end{tikzpicture}
		\caption{The minimal $2$-automaton outputting the sequence with period $(0,1,2,3,4,5)$.  Each state has a label of the form $\alpha/\rho$ where $\alpha$ corresponds to a state name in the third column of \cref{tab:example-kernel-aut} and $\rho$ is the output.}
		\label{fig: DFAO intro example}
	\end{figure}
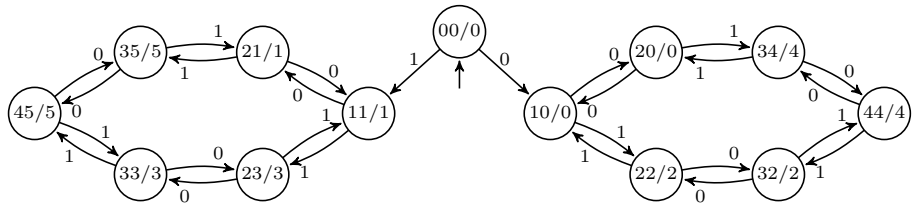
\end{example}

\subsection{Bounds on Magic Numbers}

For integers $k,\ell\ge 2$ and the family of periodic sequences $\Per(\ell)$,
we consider the following property:
an integer satisfies $\mathrm{P}_{\text{a}}(k,\ell)$ if and only if it is equal to the $k$-rank of some sequence in $\Per(\ell)$.
By merging \cref{pro: upper bound on rank for k-automatic sequences} and \cref{lem: coprime kernel},  
we obtain the range for $\mathrm{P}_{\text{a}}(k,\ell)$ when $k$ and $\ell$ are coprime as stated in below.

\begin{corollary}
\label{cor: upper bound tight aut}
Let $k,\ell\ge 2$ be coprime integers.
The interval $[\ell,B_\ell(k)]$ is the range for $\mathrm{P}_{\text{a}}(k,\ell)$ where $B_\ell(k):= \left(\sum_{e=0}^{\pre_\ell(k) - 1} \min(k^e,\ell)\right) + \ell \cdot \ord_\ell(k)$.
\end{corollary}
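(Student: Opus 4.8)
The plan is to establish $[\ell, B_\ell(k)]$ as the range for $\mathrm{P}_{\text{a}}(k,\ell)$ by verifying three things: that $\ell$ is the minimum rank achievable, that $B_\ell(k)$ is the maximum, and that both endpoints are attained by some sequence in $\Per(\ell)$. Since the corollary explicitly says it follows by merging \cref{pro: upper bound on rank for k-automatic sequences} and \cref{lem: coprime kernel}, my strategy is to treat those two results as black boxes and show how they combine. Presumably \cref{pro: upper bound on rank for k-automatic sequences} supplies the upper bound $\rank_k(s) \le B_\ell(k)$ for every $s \in \Per(\ell)$, together with a witness attaining it, while \cref{lem: coprime kernel} handles the coprime structure of the kernel and the lower endpoint $\ell$.

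First I would argue the lower bound. Any $s \in \Per(\ell)$ has the $\ell$ shifted sequences $s(n)\n, s(n+1)\n, \dots, s(n+\ell-1)\n$ all distinct, since $s$ has \emph{exact} period length $\ell$. When $\gcd(k,\ell)=1$, the map $e \mapsto k^e \bmod \ell$ is a bijection onto the subgroup generated by $k$, and more to the point the residues $j \bmod \ell$ for $0 \le j \le k^e - 1$ sweep out all of $\{0,1,\dots,\ell-1\}$ once $k^e \ge \ell$; this is exactly the kind of fact I expect \cref{lem: coprime kernel} to encode. Hence the kernel contains at least the $\ell$ distinct shifts, giving $\rank_k(s) \ge \ell$, and a sequence realizing equality (for instance one whose kernel consists of exactly these $\ell$ cyclic shifts, realized when $k \equiv 1 \bmod \ell$ or more generally when the kernel closes up quickly) shows $\ell$ is muggle.

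Next I would turn to the upper endpoint. The quantity $B_\ell(k) = \left(\sum_{e=0}^{\pre_\ell(k)-1} \min(k^e,\ell)\right) + \ell \cdot \ord_\ell(k)$ has a transparent reading in terms of the kernel: it counts the distinct subsequences $s(k^e n + j)\n$ layer by layer in $e$. For $e < \pre_\ell(k)$, the number of genuinely new residue classes contributed is bounded by $\min(k^e, \ell)$ (either $k^e$ representatives if we have not yet saturated all $\ell$ classes, or $\ell$ once saturated), and once $e$ reaches the periodic regime the residues $k^e \bmod \ell$ cycle with period $\ord_\ell(k)$, each cycle contributing at most $\ell$ new subsequences. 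Summing these bounds yields $\rank_k(s) \le B_\ell(k)$, which is precisely \cref{pro: upper bound on rank for k-automatic sequences}; combining it with a sequence attaining the bound (the maximal-rank witness, plausibly one whose period is a generic tuple like $(0,1,\dots,\ell-1)$ so no accidental coincidences collapse the kernel) shows $B_\ell(k)$ is muggle.

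The main obstacle, and the step I would spend the most care on, is confirming that no value strictly between $\ell$ and $B_\ell(k)$ can lie outside the range — but note that the \emph{range} is defined as the smallest interval \emph{containing} all muggle numbers, so for the corollary I do not actually need intermediate values to be muggle; I only need the two endpoints $\ell$ and $B_\ell(k)$ to be muggle (so they lie in the range) and every muggle number to lie between them (so the range is no larger). The latter is exactly the pair of bounds $\ell \le \rank_k(s) \le B_\ell(k)$, which is what \cref{pro: upper bound on rank for k-automatic sequences} and \cref{lem: coprime kernel} deliver. Thus the real work is purely bookkeeping: checking that the witnesses for $\ell$ and $B_\ell(k)$ genuinely lie in $\Per(\ell)$ (exact, not merely eventual, period length $\ell$) and that the coprimality hypothesis is used correctly in both the saturation count and the identification of the period length $\ord_\ell(k)$ of the residue sequence.
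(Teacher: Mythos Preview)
Your proposal is correct and matches the paper's approach exactly: the corollary has no separate proof beyond the sentence ``By merging \cref{pro: upper bound on rank for k-automatic sequences} and \cref{lem: coprime kernel}'', and you have correctly unpacked what each contributes (upper bound with witness $(0,1,\dots,\ell-1)$; lower bound with a witness) together with the key observation that the \emph{range} only requires the endpoints to be muggle and all muggle numbers to lie between them. One small sharpening: the paper's explicit witness for rank $\ell$ is the sequence with period $(0,0,\dots,0,1)$, which works for \emph{every} coprime pair $k,\ell$, so you need not restrict to $k\equiv 1 \bmod \ell$; also note that coprimality forces $\pre_\ell(k)=0$, so $B_\ell(k)$ simplifies to $\ell\cdot\ord_\ell(k)$.
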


We first obtain the upper bound in the general case.

\begin{restatable}{proposition}{UpperBoundAutomatic}
	\label{pro: upper bound on rank for k-automatic sequences}
	Let~$k,\ell\ge 2$ be two integers.
	The $k$-rank of every sequence in~$\Per(\ell)$ is at most $B_\ell(k)$.
	Moreover,
	the $k$-rank of the periodic sequence with period $(0, 1, \dots, \ell - 1)$ is $B_\ell(k)$. 
\end{restatable}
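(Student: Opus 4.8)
The plan is to exploit the fact that, since $s$ has period length $\ell$, the kernel element $s(k^e n + j)\n$ depends only on the residue pair $(k^e \bmod \ell,\, j \bmod \ell)$: if $k^e \equiv k^{e'}$ and $j \equiv j'$ modulo $\ell$, then $k^e n + j \equiv k^{e'} n + j' \pmod{\ell}$ for every $n$, so the two subsequences coincide. Hence the assignment sending a kernel element to its residue pair is well defined, and $\rank_k(s) = \size{\ker_k(s)}$ is bounded by the number of distinct pairs $(k^e \bmod \ell,\, j \bmod \ell)$ arising from $e \ge 0$ and $0 \le j \le k^e - 1$. The whole proof reduces to counting these pairs.

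For the upper bound I would split the indices $e$ into the preperiod $0 \le e \le \pre_\ell(k) - 1$ and the periodic part $e \ge \pre_\ell(k)$. For a fixed $e$, as $j$ ranges over $\{0, \dots, k^e - 1\}$ the residue $j \bmod \ell$ takes exactly $\min(k^e, \ell)$ distinct values, so the preperiod contributes at most $\sum_{e=0}^{\pre_\ell(k) - 1} \min(k^e, \ell)$ pairs. For $e \ge \pre_\ell(k)$ the residue $k^e \bmod \ell$ ranges over only the $\ord_\ell(k)$ values of the cycle of $(k^e \bmod \ell)_{e \ge 0}$, and each is paired with at most $\ell$ residues $j \bmod \ell$, contributing at most $\ell \cdot \ord_\ell(k)$ pairs. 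Since we only need an upper bound, this union bound (which may count pairs with multiplicity) suffices, and summing the two parts gives $B_\ell(k)$.

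For the moreover part I would specialize to $s(n) = n \bmod \ell$, which lies in $\Per(\ell)$ since its first $\ell$ values are pairwise distinct. Here $s(k^e n + j)\n$ equals the arithmetic progression $((k^e n + j) \bmod \ell)\n$, whose terms at $n = 0$ and $n = 1$ recover $j \bmod \ell$ and $k^e \bmod \ell$ respectively; thus the residue-pair map is now a \emph{bijection}, and $\rank_k(s)$ equals the exact number of distinct pairs. It then remains to show the two bounds above are attained with no overlap. The structural input is the shape of the orbit of $1$ under $x \mapsto kx \bmod \ell$: its tail $k^0, \dots, k^{\pre_\ell(k) - 1}$ consists of pairwise distinct residues, none lying on the cycle, so the preperiodic and periodic first coordinates are disjoint and the preperiod genuinely contributes $\sum_{e=0}^{\pre_\ell(k)-1} \min(k^e, \ell)$ distinct pairs. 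For each periodic residue, since $k^e \to \infty$ there are arbitrarily large $e$ realizing it, so the bound $k^e \ge \ell$ eventually forces $j \bmod \ell$ to sweep all of $\{0, \dots, \ell - 1\}$; hence every periodic residue is paired with all $\ell$ intercepts, giving exactly $\ell \cdot \ord_\ell(k)$ further pairs. Adding these disjoint contributions gives exactly $B_\ell(k)$.

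I expect the main obstacle to be the exact lower-bound count rather than the upper bound: one must justify the rho-shape of the functional graph of $x \mapsto kx \bmod \ell$ to guarantee that the preperiodic residues are distinct and disjoint from the cycle (so no pair is double-counted), and separately argue that every periodic residue attains the full set of $\ell$ intercepts. The bijectivity of the residue-pair map for the identity sequence is precisely what converts this count into the rank.
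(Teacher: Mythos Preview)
Your proposal is correct and follows essentially the same approach as the paper. Both arguments hinge on the observation that, for $s\in\Per(\ell)$, the kernel element $s(k^e n+j)\n$ depends only on the residue pair $(k^e\bmod\ell,\,j\bmod\ell)$, and both recover the pair from the sequence $g_\ell(n)=n\bmod\ell$ by evaluating at $n=0,1$; the paper merely packages the preperiod/period split into a separate ``funnel/basin'' decomposition lemma ($\ker_k(s)=\fun_{k,\ell}(s)\cup\basin_{k,\ell}(s)$), whereas you count residue pairs directly, which is slightly more streamlined but not materially different.
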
%

To prove \cref{pro: upper bound on rank for k-automatic sequences},  we need a new notion and two lemmas.
We start off with a first one.

\begin{lemma}
\label{lem: rewriting}
Let $k,\ell \ge 2$ be integers and let $s \in \Per(\ell)$.
Every sequence in $\ker_k(s)$ can be written as $s(cn+r)\n$ with $c,r\in\{0,1,\ldots,\ell-1\}$.
In particular,  if $k$ and~$\ell$ are coprime, then $c$ is invertible in $\Z_\ell$.
\end{lemma}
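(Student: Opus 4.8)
The plan is to exploit the periodicity of $s$ to reduce both the multiplier and the shift modulo $\ell$ simultaneously. First I would take an arbitrary element of $\ker_k(s)$, which by definition is of the form $s(k^e n + j)\n$ for some $e \ge 0$ and $0 \le j \le k^e - 1$. The only property of $s$ I need is that $s(m) = s(m')$ whenever $m \equiv m' \pmod{\ell}$, which is immediate from $s \in \Per(\ell)$.

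The key computation is that, for every fixed $n \ge 0$,
\[
	k^e n + j \equiv (k^e \bmod \ell)\, n + (j \bmod \ell) \pmod{\ell},
\]
since $k^e \equiv k^e \bmod \ell$ and $j \equiv j \bmod \ell$ modulo $\ell$. Setting $c := k^e \bmod \ell$ and $r := j \bmod \ell$, both of which lie in $\{0, 1, \dots, \ell - 1\}$, the two arguments $k^e n + j$ and $cn + r$ are nonnegative and congruent modulo $\ell$, so periodicity gives $s(k^e n + j) = s(cn + r)$ for every $n$. Hence the chosen kernel element equals $s(cn + r)\n$, as required.

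For the coprime case, I would simply note that $\gcd(k, \ell) = 1$ forces $\gcd(k^e, \ell) = 1$, so the residue $c = k^e \bmod \ell$ is a unit in $\Z_\ell$. There is no genuine obstacle here: the whole argument is essentially a one-line consequence of $\ell$-periodicity, and the only point that warrants care is verifying that the reduction can be applied to the coefficient of $n$ and not merely to the constant term, which is exactly what the congruence above records.
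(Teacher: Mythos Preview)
Your proof is correct and follows essentially the same approach as the paper: reduce $k^e$ and $j$ modulo $\ell$ using periodicity, then argue invertibility of $c$ from $\gcd(k,\ell)=1$. Your coprime step is even a bit slicker than the paper's---the paper argues by contradiction via a common prime factor of $c$ and $\ell$, whereas you directly note $\gcd(k^e,\ell)=1$ and hence $c=k^e\bmod\ell$ is a unit.
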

\begin{proof}
Let $e \ge 0$ and $0\le j \le k^e-1$.
Since $s\in\Per(\ell)$,  we have $s(k^e n + j)\n = s((k^e \bmod{\ell} ) n + (j \bmod{\ell}))\n$.
Now suppose that $k$ and $\ell$ are coprime.
We show that $\gcd(c,\ell)=1$.
To the contrary, suppose that $\ell$ and $c$ have a common prime factor $p$, and write $\ell=m_1 p$ and $c=m_2 p$ for some integers $m_1,m_2\ge 1$.
By definition, let $e,m_3\ge 0$ be such that $k^e=m_3\ell + c$.
Then we have $k^e=(m_1m_3+m_2)p$, and thus $p$ is a common divisor of $k$ and $\ell$, which is impossible.
\qed
\end{proof}

For a sequence $s \in \Per(\ell)$, we define two sets of sequences: the \emph{$(k,\ell)$-funnel} and the \emph{$(k,\ell)$-basin}:
\begin{align*}
\fun_{k,\ell}(s)&=\{s(k^e n + j)\n : 0\le e\le \pre_\ell(k) - 1 \text{ and } 0\le j \le \min(k^e,\ell)-1\}, \\
\basin_{k,\ell}(s)&=\{s(k^{e + \pre_\ell(k)} n + j)\n : 0\le e\le \ord_\ell(k) - 1 \text{ and } 0\le j \le \ell-1\}.
\end{align*}

It is clear from the definition that $\fun_{k,\ell}(s) \subseteq \ker_k(s)$.
We will show that $\basin_{k,\ell}(s) \subseteq \ker_k(s)$ as well.
For each kernel sequence $s(k^{e + \pre_\ell(k)} n + j)\n$ in $\basin_{k,\ell}(s)$, the kernel sequence $s(k^{e + \pre_\ell(k)} (k n + r) + j)\n$ is also in $\basin_{k,\ell}(s)$, so the basin is closed under this operation.
The funnel consists of kernel sequences $s(k^e n + j)\n$ for which $e$ is too small for $k^e$ to belong to the eventual period of powers of $k$ modulo $\ell$.
It follows from the definition of the basin that if $s\in\Per(\ell)$ and $k\equiv k' \mod\ell$, then $\basin_{k,\ell}(s)=\basin_{k',\ell}(s)$.

We have the following description of the kernel.

\begin{lemma}\label{lem: kernel decomposition}
Let $k,\ell\ge 2$ be integers.
If $s \in \Per(\ell)$, then $\ker_k(s) = \fun_{k,\ell}(s) \cup \basin_{k,\ell}(s)$. 
\end{lemma}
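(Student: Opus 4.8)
The plan is to prove the two inclusions $\ker_k(s) \supseteq \fun_{k,\ell}(s) \cup \basin_{k,\ell}(s)$ and $\ker_k(s) \subseteq \fun_{k,\ell}(s) \cup \basin_{k,\ell}(s)$ separately. The first inclusion is essentially already established in the discussion preceding the lemma: by definition $\fun_{k,\ell}(s) \subseteq \ker_k(s)$ since each of its generators $s(k^e n + j)\n$ with $0 \le j \le k^e - 1$ is literally a kernel element, and for the basin one observes that each generator $s(k^{e+\pre_\ell(k)} n + j)\n$ has $0 \le j \le \ell - 1 \le k^{e+\pre_\ell(k)} - 1$ (using $k, \ell \ge 2$), so it too is a genuine kernel element. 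Thus the reverse inclusion requires only a short verification that the index ranges defining the funnel and basin fall within the admissible kernel ranges.

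The substantive content is the forward inclusion $\ker_k(s) \subseteq \fun_{k,\ell}(s) \cup \basin_{k,\ell}(s)$. First I would take an arbitrary kernel element $s(k^e n + j)\n$ with $e \ge 0$ and $0 \le j \le k^e - 1$ and split into cases according to whether $e < \pre_\ell(k)$ or $e \ge \pre_\ell(k)$. In the first case, since $s \in \Per(\ell)$ we may reduce $j$ modulo $\ell$ by \cref{lem: rewriting}, writing $s(k^e n + j)\n = s(k^e n + (j \bmod \ell))\n$; because $0 \le j \bmod \ell \le \min(k^e, \ell) - 1$ (the bound $k^e$ being relevant precisely when $k^e \le \ell$), this sequence lies in $\fun_{k,\ell}(s)$.

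In the second case, when $e \ge \pre_\ell(k)$, I would use the definitions of $\pre_\ell(k)$ and $\ord_\ell(k)$: the powers $k^e \bmod \ell$ for $e \ge \pre_\ell(k)$ are eventually periodic with period $\ord_\ell(k)$, so there exist integers $e'$ with $\pre_\ell(k) \le e' \le \pre_\ell(k) + \ord_\ell(k) - 1$ and $k^e \equiv k^{e'} \pmod \ell$. Writing $e' = e'' + \pre_\ell(k)$ with $0 \le e'' \le \ord_\ell(k) - 1$, and reducing $j$ modulo $\ell$, the relation $s \in \Per(\ell)$ gives $s(k^e n + j)\n = s(k^{e'' + \pre_\ell(k)} n + (j \bmod \ell))\n$, which is exactly a generator of $\basin_{k,\ell}(s)$ since $0 \le j \bmod \ell \le \ell - 1$. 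The main obstacle, and the step deserving the most care, is the bookkeeping around $\pre_\ell(k)$: one must check that for $e \ge \pre_\ell(k)$ the congruence $k^{e+\ord_\ell(k)} \equiv k^e \pmod \ell$ genuinely holds (this is the defining property of $\pre_\ell(k)$), so that reducing the exponent into the window $[\pre_\ell(k), \pre_\ell(k) + \ord_\ell(k) - 1]$ is legitimate, and that in the funnel case the $\min(k^e, \ell)$ bound correctly captures both the regime $k^e < \ell$ (where $j$ cannot exceed $k^e - 1$) and $k^e \ge \ell$ (where reduction modulo $\ell$ applies). Once these congruence arithmetic points are handled cleanly, both inclusions follow directly.
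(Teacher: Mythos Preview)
Your forward inclusion $\ker_k(s)\subseteq\fun_{k,\ell}(s)\cup\basin_{k,\ell}(s)$ is fine and matches the paper's argument closely. The gap is in your reverse inclusion for the basin. You assert that every generator $s(k^{e+\pre_\ell(k)}n+j)\n$ with $0\le j\le\ell-1$ is ``literally'' a kernel element because $\ell-1\le k^{e+\pre_\ell(k)}-1$, but this inequality is false in general. Take $k=2$, $\ell=5$: here $\gcd(k,\ell)=1$, so $\pre_\ell(k)=0$, and for $e=0$ the basin generator $s(n+j)\n$ with $j\in\{1,2,3,4\}$ has $j>k^0-1=0$, so it is \emph{not} of the form $s(k^{e'}n+j')\n$ with $0\le j'\le k^{e'}-1$ for $e'=0$. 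The sequence $s(n+3)\n$, say, is of course in $\ker_k(s)$, but only because $2^4\equiv 1\pmod 5$ lets you rewrite it as $s(2^4n+3)\n$; your argument never invokes this.

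The fix is exactly the step the paper isolates as a separate case. When $k^{\pre_\ell(k)}<\ell$, one must use the eventual periodicity of $(k^e\bmod\ell)_{e\ge 0}$ to replace any exponent $e$ in the basin window $[\pre_\ell(k),\,\pre_\ell(k)+\ord_\ell(k)-1]$ by some $e'\equiv e\pmod{\ord_\ell(k)}$ large enough that $k^{e'}\ge\ell$; then $s(k^{e}n+j)\n=s(k^{e'}n+j)\n$ is a legitimate kernel element since now $j\le\ell-1\le k^{e'}-1$. The paper handles this by introducing $E=\lceil\log_k\ell\rceil$ and showing that the basin coincides with $\{s(k^en+j)\n:e\ge E,\ 0\le j\le\ell-1\}$, which is manifestly a subset of $\ker_k(s)$. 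Once you add this exponent-lifting step, your two-inclusion proof is complete and essentially equivalent to the paper's.
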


\begin{proof}
We write $\ker_k(s)$ as the union of the two sets
\begin{align*}
	F_{k,\ell}(s)&=\{s(k^e n + j)\n : \text{$0\le e\le \pre_\ell(k) - 1$ and $0\le j \le k^e-1$}\}, \\
	B_{k,\ell}(s)&=\{s(k^e n + j)\n : \text{$e \ge \pre_\ell(k)$ and $0\le j \le k^e-1$}\}.
\end{align*}
Since $s \in \Per(\ell)$ (or \cref{lem: rewriting}), we have
\begin{align*}
	F_{k,\ell}(s) &=\{s(k^e n + j)\n : \text{$0\le e\le \pre_\ell(k) - 1$ and $0\le j \le \min(k^e,\ell)-1$}\}, \\
	B_{k,\ell}(s) &=\{s(k^e n + j)\n : \text{$e \ge \pre_\ell(k)$ and $0\le j \le \min(k^e,\ell)-1$}\}.
\end{align*}
Therefore, $F_{k,\ell}(s) = \fun_{k,\ell}(s)$.
We show $B_{k,\ell}(s) = \basin_{k,\ell}(s)$.
There are two cases.
If $k^{\pre_\ell(k)} \ge \ell$, then $\min(k^e,\ell) = \ell$ for all $e \ge \pre_\ell(k)$, so
\begin{align*}
	B_{k,\ell}(s)
	&= \{s(k^e n + j)\n : \text{$e \ge \pre_\ell(k)$ and $0\le j \le \ell-1$}\} \\
	&= \{s(k^{e + \pre_\ell(k)} n + j)\n : \text{$e \ge 0$ and $0\le j \le \ell-1$}\}.
\end{align*}
Since $k^{e + \ord_\ell(k)} \equiv k^e \mod \ell$ for all $e \ge \pre_\ell(k)$, we have $B_{k,\ell}(s) = \basin_{k,\ell}(s)$.
Now if $k^{\pre_\ell(k)} < \ell$, let $E = \ceil{\log_k \ell}$, so that $k^e \ge \ell$ for all $e \ge E$.
Since $\pre_\ell(k) < E$ (by definition of $E$),
\begin{multline*}
	B_{k,\ell}(s)
	= \{s(k^e n + j)\n : \text{$\pre_\ell(k) \leq e \leq E - 1$ and $0\le j \le k^e-1$}\} \\
	\cup \{s(k^e n + j)\n : \text{$e \ge E$ and $0\le j \le \ell-1$}\}.
\end{multline*}
We show that the first set in the right-hand side of the previous equality is a subset of the second.
Let $\pre_\ell(k) \leq e \leq E - 1$ and $0\le j \le k^e-1$, so that $s(k^e n + j)\n$ belongs to the first set.
Let $e' \ge E$ such that $k^{e'} \equiv k^e \mod \ell$.
Then $s(k^e n + j)\n = s(k^{e'} n + j)\n$.
Therefore,
\begin{align*}
	B_{k,\ell}(s)
	&= \{s(k^e n + j)\n : \text{$e \ge E$ and $0\le j \le \ell-1$}\} \\
	&= \{s(k^e n + j)\n : \text{$\pre_\ell(k) \le e\le \pre_\ell(k) + \ord_\ell(k) - 1$ and $0\le j \le \ell-1$}\} \\
	&= \basin_{k,\ell}(s).
\end{align*}
This concludes the proof.
\qed
\end{proof}

\begin{proof}[of  \cref{pro: upper bound on rank for k-automatic sequences}]
By \cref{lem: kernel decomposition},
\[
	\rank_k(s) \le
	\size{\fun_{k,\ell}(s)} + \size{\basin_{k,\ell}(s)}
	\leq \left(\sum_{e=0}^{\pre_\ell(k) - 1} \min(k^e,\ell)\right) + \ell \cdot \ord_\ell(k)\text,
\]
as desired.

As for the second part of the statement,  let $g_\ell$ be the sequence with period $(0,1,  \ldots, \ell-1)$. 
The first part of the statement implies that $\rank_k(g_\ell) \leq B_\ell(k)$.
We show that equality holds.
Let $s_1,s_2\in \ker_k(g_\ell)$, and write 
\[
s_1(n)\n=g_\ell(k^{e_1}n+j_1)\n=g_\ell(c_1n+r_1)\n
\]
and 
\[
s_2(n)\n=g_\ell(k^{e_2}n+j_2)\n=g_\ell(c_2n+r_2)\n
\]
with $0\le e_1,e_2\le \pre_\ell(k)+ \ord_\ell(k) -1$,  $0\le j_1,j_2 \le \ell-1$,  and $c_1,c_2,r_1,r_2\in\{0,1,\ldots,\ell-1\}$ by \cref{lem: rewriting}.
By definition of $g_\ell$, we have $g_\ell(n)=g_\ell(m)$ if and only if $n\equiv m \mod{\ell}$.
Therefore,  $s_1=s_2$ implies in particular that $r_2 \equiv r_1 \mod\ell$ ($n=0$) and $c_2+r_2 \equiv c_1+r_1 \mod\ell$ ($n=1$).
Since $c_1,c_2,r_1,r_2\in\{0,1,\ldots,\ell-1\}$, we obtain $r_1=r_2$ and $c_1=c_2$.
This in turn implies that $e_1=e_2$ and $j_1=j_2$,  which concludes the proof.
\qed
\end{proof}

\begin{example}
Let $k=2$ and $\ell=6$.
The sequence $(k^e \bmod{\ell})_{e\ge 0}$ is $1,2,4,2,4,2,4,\ldots$, so $\pre_\ell(k)=1$ and $\ord_\ell(k)=2$.
For the sequence in~\cref{ex: 2-6 kernel},  we obtain $B_\ell(k)=2^0 + 6\cdot 2=13$. 
The minimal DFAO generating~$g_\ell(n)_{n\ge0}$ is shown in \cref{fig: DFAO intro example} and has 13 states.
\end{example}

We then obtain the lower bound of \cref{cor: upper bound tight aut}.

\begin{restatable}{proposition}{LowerBoundAutomatic}
\label{lem: coprime kernel}
	Let $k,\ell\ge 2$ be coprime integers.
	The $k$-rank of every sequence in~$\Per(\ell)$ is at least~$\ell$.
	Moreover,
	the $k$-rank of the periodic sequence with period $(0,0,\ldots,0,1)$ is $\ell$. 
\end{restatable}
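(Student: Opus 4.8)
The plan is to prove the two assertions separately: first the lower bound $\rank_k(s)\ge\ell$ for an arbitrary $s\in\Per(\ell)$, and then the matching upper bound $\rank_k(s)\le\ell$ for the specific period $(0,0,\ldots,0,1)$; combining the two yields the claimed value $\ell$ for that sequence. Throughout I would exploit the coprimality of $k$ and $\ell$ in two ways: it forces $\pre_\ell(k)=0$, since $k^{0+\ord_\ell(k)}\equiv 1\equiv k^0\bmod\ell$ when $k$ is a unit modulo $\ell$, and, via \cref{lem: rewriting}, it guarantees that every kernel sequence is of the form $s(cn+r)\n$ with $c$ invertible in $\Z_\ell$.

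For the lower bound, I would extract $\ell$ distinct kernel sequences directly from the basin. Since $\pre_\ell(k)=0$, the term $e=0$ in the definition of $\basin_{k,\ell}(s)$ contributes exactly the $\ell$ shifts $s(n+j)\n$ for $0\le j\le \ell-1$, and these all lie in $\ker_k(s)$ because $\basin_{k,\ell}(s)\subseteq\ker_k(s)$ by \cref{lem: kernel decomposition}. It remains to check they are pairwise distinct. If $s(n+j)\n=s(n+j')\n$ with $0\le j\le j'\le\ell-1$, then $j'-j$ is a period of the purely periodic sequence $s$, so $\ell\mid(j'-j)$ and hence $j=j'$; here the minimality of the period length $\ell$ is exactly what is being used. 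Thus $\ker_k(s)$ contains at least $\ell$ elements, giving $\rank_k(s)\ge\ell$.

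For the upper bound on the sequence $g$ with period $(0,0,\ldots,0,1)$, I would observe that $g(n)=1$ precisely when $n\equiv-1\bmod\ell$. By \cref{lem: rewriting}, any element of $\ker_k(g)$ equals $g(cn+r)\n$ with $c$ invertible modulo $\ell$ and $r\in\{0,\ldots,\ell-1\}$, so $g(cn+r)=1$ if and only if $cn+r\equiv-1\bmod\ell$, that is, $n\equiv c^{-1}(-1-r)\bmod\ell$. Because $c$ is a unit, this singles out a unique residue class modulo $\ell$, so each kernel sequence is the indicator of one residue class $n\equiv a\bmod\ell$. There are only $\ell$ such indicators, whence $\rank_k(g)\le\ell$. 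Together with the lower bound this forces $\rank_k(g)=\ell$.

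The routine verifications (that $\pre_\ell(k)=0$ and that distinct shifts remain distinct) are immediate, so the only real content is conceptual: coprimality makes $n\mapsto cn+r$ a bijection on $\Z_\ell$, which simultaneously supplies $\ell$ distinct shifts in every kernel and caps the kernel of the single-$1$ period at $\ell$ single-residue indicators. I expect the main point to watch to be ensuring the shifts genuinely belong to the kernel, which is precisely where $\pre_\ell(k)=0$, hence the $e=0$ slice of the basin, is needed.
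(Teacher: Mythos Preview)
Your proof is correct and follows essentially the same route as the paper: you use $\pre_\ell(k)=0$ and \cref{lem: kernel decomposition} to exhibit the $\ell$ distinct shifts inside the kernel for the lower bound, and for the single-$1$ period you invoke \cref{lem: rewriting} to write every kernel element as $g(cn+r)\n$ with $c$ invertible and observe it is the indicator of a single residue class (equivalently, a shift of $g$), giving the upper bound. The paper's argument is virtually identical, only phrased slightly more tersely.
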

\begin{proof}
By \cref{lem: kernel decomposition}, $\ker_k(s) = \basin_{k,\ell}(s)$.
Since $\pre_\ell(k)=0$,  then the sequences $s(n+j)\n$ are elements of $\ker_k(s)$ for $j\ge 0$.
There are $\ell$ distinct such sequences, so $\rank_k(s)\ge\ell$.

For the second part of the statement,  let~$s \in \Per(\ell)$ be the sequence with period $(0,0,\ldots,0,1)$.
By \cref{lem: rewriting,lem: kernel decomposition}, each sequence in $\ker_k(s)  = \basin_{k,\ell}(s)$ can be written $s(cn+r)\n$ where $c,r\in\{0,1,\ldots,\ell-1\}$ and $c\neq 0$ is invertible in~$\Z_\ell$.
The sequence $s(cn+r)\n$ is periodic with period length at most $\ell$.
For all $n\ge 0$, $s(cn+r)=1$ if and only if $cn+r \equiv \ell-1 \mod\ell$, which is equivalent to $n \equiv c^{-1}(\ell-1-r) \mod\ell$.
Therefore,  $s(cn+r)\n$ is a shift of $s$, and $\rank_k(s)\le\ell$.
This inequality together with the first part of the statement gives $\rank_k(s)=\ell$.
\qed
\end{proof}

\subsection{Towards a Characterization}

The magic and muggle numbers with respect to $\mathrm{P}_{\text{a}}(k,\ell)$ when $k$ and $\ell$ are coprime are characterized in the following theorem.

\begin{theorem}\label{thm: characterization of sizes for automatic sequences coprime case}
	Let $k,\ell\ge 2$ be coprime integers.
	Let $A= \{d \ell : d \in \Divs(\ord_\ell(k))\}$.
	Then $A$ is the set of muggle numbers in the range for $\mathrm{P}_{\text{a}}(k,\ell)$.
	In particular,
	the set of magic numbers with respect to $\mathrm{P}_{\text{a}}(k,\ell)$ is $\N\setminus A$.
\end{theorem}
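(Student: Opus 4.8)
The plan is to recast the $k$-kernel as the orbit of a single function under an affine group and then apply the orbit--stabilizer theorem. Since $k$ and $\ell$ are coprime we have $\pre_\ell(k)=0$, so \cref{lem: kernel decomposition} gives $\ker_k(s)=\basin_{k,\ell}(s)$, and by \cref{lem: rewriting} every kernel sequence has the form $s(cn+r)\n$ with $r\in\{0,\dots,\ell-1\}$ and $c$ ranging over $G:=\langle k \bmod \ell\rangle$, the cyclic subgroup of $(\Z/\ell\Z)^\times$ of order $N:=\ord_\ell(k)$. Writing $\overline s\colon \Z/\ell\Z\to\N$ for the map induced by $s$, and denoting by $\sigma_c$ the scaling $n\mapsto cn$ and by $\tau_r$ the translation $n\mapsto n+r$, the sequence $s(cn+r)\n$ is determined by the function $\overline s\circ\phi_{c,r}$ where $\phi_{c,r}=\tau_r\sigma_c$. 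These maps form the affine group $H=\{n\mapsto cn+r : c\in G,\ r\in\Z/\ell\Z\}=T\rtimes G$, with translation subgroup $T\cong\Z/\ell\Z$ and $|H|=\ell N=B_\ell(k)$. Precomposition by $H$ is a group action whose orbit of $\overline s$ is exactly the set $\ker_k(s)$, so $\rank_k(s)=[H:S]$ where $S=\{\phi\in H : \overline s\circ\phi=\overline s\}$ is the stabilizer.

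For the inclusion that every muggle number lies in $A$, the key observation is that $S\cap T=\{\mathrm{id}\}$: a translation $\tau_r$ fixes $\overline s$ if and only if $r$ is a period of $s$, and since $s\in\Per(\ell)$ has period length exactly $\ell$, only $r\equiv 0$ qualifies. Hence the quotient map $H\to H/T\cong G$ restricts to an injection on $S$, so by Lagrange $\size{S}$ divides $\size{G}=N$. Setting $d=\size{S}$, we obtain $\rank_k(s)=\ell N/d=\ell\,(N/d)$ with $N/d\in\Divs(N)$, that is, $\rank_k(s)\in A$.

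For the reverse inclusion I would exhibit, for each $m\in\Divs(N)$, a sequence of $k$-rank $\ell m$. Let $G_m=\langle k^m\bmod\ell\rangle$, the subgroup of $G$ of order $N/m$, and define $\overline s$ to be constant on each orbit of $G_m$ acting on $\Z/\ell\Z$ by multiplication, taking pairwise distinct values on distinct orbits. Then clearly $\{\sigma_c : c\in G_m\}\subseteq S$. The technical heart is to verify that this is all of $S$: if $\overline s(cn+r)=\overline s(n)$ for all $n$, then evaluating at $n=0$ forces $r$ into the $G_m$-orbit of $0$, which is the singleton $\{0\}$, so $r=0$; evaluating then at $n=1$ forces $c$ into the $G_m$-orbit of $1$, which is $G_m$ itself, so $c\in G_m$. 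Thus $S=\{\sigma_c:c\in G_m\}$ has order $N/m$, and $S\cap T=\{\mathrm{id}\}$ confirms $s\in\Per(\ell)$, whence $\rank_k(s)=\ell N/(N/m)=\ell m$. As $m$ runs over $\Divs(N)$, the values $\ell m$ exhaust $A$, so the muggle numbers are exactly $A$ and the magic numbers are $\N\setminus A$.

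The main obstacle I anticipate is the precise determination of the stabilizer in the construction, namely ruling out spurious affine symmetries $\tau_r\sigma_c$ with $r\neq 0$ or $c\notin G_m$; this rests on the bookkeeping that the $G_m$-orbit of $0$ is a singleton and that of $1$ is $G_m$ itself, which is exactly what the evaluations at $n=0$ and $n=1$ pin down. Everything else — the semidirect-product structure of $H$, the triviality of $S\cap T$, and the final index count — is routine once the orbit picture is set up.
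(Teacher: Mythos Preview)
Your proof is correct and the construction you give for the reverse inclusion is essentially the same orbit-colouring as in the paper. The route to the first inclusion, however, is organised differently. The paper argues directly with the ``slices'' $B_{k,\ell,e}(s)=\{s(k^e n+j)\n : 0\le j\le\ell-1\}$: each slice consists of the $\ell$ shifts of $s(k^e n)\n$, these are distinct by \cref{lem: subsequence coprime}, two slices that meet must coincide, and the sequence of slices has period dividing $\ord_\ell(k)$, so the rank is $\ell$ times a divisor of $\ord_\ell(k)$. You instead package the same information as an action of the affine group $H=T\rtimes G$ on functions $\Z/\ell\Z\to\N$ and invoke orbit--stabilizer; your slices are the $T$-cosets, and your step ``$S\cap T=\{\mathrm{id}\}$, hence $S$ embeds in $G$'' is the group-theoretic rephrasing of the paper's ``slices are disjoint or equal, each of size $\ell$''. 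The payoff of your formulation is that both the divisibility constraint and the stabilizer computation for the constructed sequence become one-line applications of Lagrange and the evaluations at $n=0,1$; the paper's version is more elementary in that it never names the group, proving everything by direct inspection of kernel sequences. For the reverse inclusion the paper verifies that the $d$ slices $B_{k,\ell,0},\dots,B_{k,\ell,d-1}$ are pairwise distinct by the same device you use (the value $0$ occurs only at position $0$, then compare at $n=1$), so the two arguments converge there.
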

The following lemmas are
useful for the proof of \cref{thm: characterization of sizes for automatic sequences coprime case}.

\begin{restatable}{lemma}{KernelRewritingCoprime}
	\label{lem: kernel rewriting}
	Let $k,\ell\ge 2$ be coprime integers and let $s\in \Per(\ell)$.
	Then $\ker_k(s)=\{s(k^e n+ j )\n \colon \text{$e \ge 0$ and $0\le j \le \ell-1$} \}$.
\end{restatable}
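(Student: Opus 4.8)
The plan is to reduce the statement to the kernel decomposition already proved in \cref{lem: kernel decomposition}. The first thing I would record is that coprimality forces $\pre_\ell(k) = 0$: since $k$ is invertible modulo $\ell$, we have $k^{\ord_\ell(k)} \equiv 1 \equiv k^0 \pmod{\ell}$, so $e = 0$ already satisfies the congruence defining $\pre_\ell(k)$. Consequently the funnel $\fun_{k,\ell}(s)$, whose exponent index runs over $0 \le e \le \pre_\ell(k) - 1$, is empty, and \cref{lem: kernel decomposition} collapses to
\[
	\ker_k(s) = \basin_{k,\ell}(s) = \{s(k^e n + j)\n : 0 \le e \le \ord_\ell(k) - 1 \text{ and } 0 \le j \le \ell - 1\}.
\]
This immediately gives the inclusion of $\ker_k(s)$ into the target set $\{s(k^e n + j)\n : e \ge 0 \text{ and } 0 \le j \le \ell - 1\}$, because each basin representative already matches the description verbatim (the constraint $0 \le e \le \ord_\ell(k) - 1$ is stronger than $e \ge 0$).

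For the reverse inclusion I would take an arbitrary sequence $s(k^e n + j)\n$ with $e \ge 0$ and $0 \le j \le \ell - 1$ and exhibit it as a basin element. Since $k$ and $\ell$ are coprime, the sequence $(k^e \bmod \ell)_{e \ge 0}$ is purely periodic with period $\ord_\ell(k)$, so there is an $e'$ with $0 \le e' \le \ord_\ell(k) - 1$ and $k^{e'} \equiv k^e \pmod{\ell}$. Using $s \in \Per(\ell)$ together with $0 \le j \le \ell - 1$ (so that $j \bmod \ell = j$), I obtain $s(k^e n + j)\n = s(k^{e'} n + j)\n$, which lies in $\basin_{k,\ell}(s) = \ker_k(s)$. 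Combining the two inclusions yields the claimed equality.

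The statement is essentially a bookkeeping refinement of the kernel decomposition, so I do not expect a genuine obstacle. The one point that deserves care is precisely the reverse inclusion for small exponents: when $k^e < \ell$, the target set formally permits $j$ up to $\ell - 1 > k^e - 1$, which falls outside the index range in the raw definition of $\ker_k(s)$. The periodicity reduction above is exactly what legitimizes these otherwise out-of-range pairs, and it is worth spelling out. Equivalently, one could bypass the basin and argue directly: pick any $E$ with $k^E \equiv k^e \pmod{\ell}$ and $k^E \ge \ell$ (take $E = e + m\,\ord_\ell(k)$ for $m$ large, using $k \ge 2$), so that $0 \le j \le \ell - 1 \le k^E - 1$ makes $s(k^E n + j)\n$ a bona fide kernel sequence, while $s \in \Per(\ell)$ gives $s(k^E n + j)\n = s(k^e n + j)\n$.
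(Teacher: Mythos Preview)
Your proof is correct and follows essentially the same approach as the paper: observe that coprimality gives $\pre_\ell(k)=0$, so the funnel is empty and \cref{lem: kernel decomposition} yields $\ker_k(s)=\basin_{k,\ell}(s)$. The paper stops there in a single sentence, whereas you additionally spell out why the basin coincides with the set indexed by all $e\ge 0$ (via the periodicity of $(k^e\bmod\ell)_{e\ge 0}$); this extra step is correct and arguably makes the argument more self-contained, but it is the same route.
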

\begin{proof}
	\label{proof: lem: kernel rewriting}
It follows from \cref{lem: kernel decomposition} as,  when $k$ and $\ell$ are coprime,  the set $\fun_{k,\ell}(s)$ is empty.
\qed
\end{proof}

\begin{restatable}{lemma}{SubsequenceCoprime}
	\label{lem: subsequence coprime}
	Let $k,\ell\ge 2$ be coprime integers. 
	If $s$ belongs to $\Per(\ell)$,  then so does~$s(kn)\n$.
\end{restatable}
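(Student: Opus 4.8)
The goal is to show that $s(kn)\n$ has period length exactly $\ell$, which I would do by bounding the period length on both sides. For the upper bound I would simply apply \cref{pro: subsequence period length} with $c = k$ and $r = 0$: since $\gcd(k,\ell) = 1$, it gives that $s(kn)\n$ is periodic with period length dividing $\frac{\ell}{\gcd(k,\ell)} = \ell$. This is the easy half.

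For the lower bound, write $t(n) = s(kn)$ and let $\ell'$ denote its period length, so that $\ell' \mid \ell$; the aim is to conclude $\ell' = \ell$. From $t(n+\ell') = t(n)$ I obtain $s(kn + k\ell') = s(kn)$ for all $n \ge 0$. The crucial step exploits that $k$ is invertible modulo $\ell$: as $n$ ranges over the nonnegative integers, the residues $kn \bmod \ell$ cover all of $\{0, 1, \ldots, \ell - 1\}$. Combining this with the $\ell$-periodicity of $s$, the identity above rewrites as $s(m + a) = s(m)$ for every $m \ge 0$, where $a = k\ell' \bmod \ell$; in other words, $a$ is a period of $s$.

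Since $s \in \Per(\ell)$, its minimal positive period is $\ell$, and every period is a multiple of $\ell$. Because $0 \le a < \ell$, this forces $a = 0$, i.e.\ $\ell \mid k\ell'$. Coprimality of $k$ and $\ell$ then yields $\ell \mid \ell'$, which combined with $\ell' \mid \ell$ gives $\ell' = \ell$, as desired. I expect the main obstacle to be the passage from $s(kn + k\ell') = s(kn)$ --- an identity that a priori constrains $s$ only at indices of the form $kn$ --- to the genuine period relation $s(m+a) = s(m)$ valid for all $m$. This is precisely where the coprimality hypothesis does its work, through the surjectivity of multiplication by $k$ on $\Z_\ell$, and it deserves a careful justification rather than being taken for granted.
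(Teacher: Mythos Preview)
Your argument is correct. Both halves are sound: the upper bound via \cref{pro: subsequence period length} is exactly right, and for the lower bound the passage from $s(kn+k\ell')=s(kn)$ to $s(m+a)=s(m)$ for all $m$ is valid once you observe that, by $\ell$-periodicity of $s$, both sides depend only on the residue class of $kn$ modulo $\ell$, and these residues exhaust $\Z_\ell$ since $\gcd(k,\ell)=1$. Your remark that this step deserves care is well taken; the justification you sketch is enough.

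The paper's proof uses the same coprimality hypothesis but packages it differently. Rather than unwinding the period relation by hand, it argues by contradiction: if $t(n)=s(kn)$ had period length $d\mid\ell$ with $d<\ell$, then letting $j$ be the inverse of $k$ modulo $\ell$, the further subsequence $t(jn)=s(jkn)$ would have period length dividing $d$ (again by \cref{pro: subsequence period length}); but $jk\equiv 1\pmod\ell$ forces $s(jkn)=s(n)$, contradicting $s\in\Per(\ell)$. In short, the paper \emph{inverts} the subsequence operation, while you analyze its effect on periods directly. The paper's route is a line shorter and avoids the residue-surjectivity discussion; yours is more self-contained and makes the mechanism (that $k$ acts bijectively on $\Z_\ell$) explicit rather than hiding it in the choice of $j$.
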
\begin{proof}
	\label{proof: lem: subsequence coprime}
	We proceed by contradiction,  and we assume that $s(kn)\n\in \Per(d)$ with $d \mid \ell$ and $d\neq \ell$.
	Let $j$ be the inverse of $k$ modulo~$\ell$ which exists because $\gcd(k,\ell)=1$.
	Then $s(j k n)\n$ has period length which divides $d$.
	This is impossible since the previous sequence is in fact $s$,  which has period $\ell > d$.
\qed
\end{proof}

We are now ready to give a proof of \cref{thm: characterization of sizes for automatic sequences coprime case}.

\begin{proof}[of \cref{thm: characterization of sizes for automatic sequences coprime case}]
	For the sake of conciseness set $T=\{\rank_k(s) : s \in \Per(\ell)\}$.
	We prove $T=A$ by showing two inclusions.

	\textbf{First inclusion.}
		By \cref{lem: coprime kernel}, $\rank_k(s) \ge \ell$ for all $s \in \Per(\ell)$.
		Using \cref{lem: kernel rewriting},  for each $e \ge 0$, let $B_{k,\ell,e}(s) = \{s(k^e n + j)\n : 0\le j \le \ell-1\}$ be the set of kernel sequences arising from exponent $e$.
		Since $k$ and $\ell$ are coprime,
		the sequences in $B_{k,\ell,e}(s)$ are precisely the $\ell$ shifts $s(k^e (n + i))\n$ ($0\le i \le \ell-1$) of $s(k^e n)\n$, and they are distinct by \cref{lem: subsequence coprime},  so $\size{B_{k,\ell,e}(s)} = \ell$.
		Therefore,  if $B_{k,\ell,e}(s)\cap B_{k,\ell,e'}(s)\neq\emptyset$ for some $e,e' \ge 0$, then $B_{k,\ell,e}(s)=B_{k,\ell,e'}(s)$.
		The sequence $(B_{k,\ell,e}(s))_{e \ge 0}$ of sets is periodic with period length dividing $\ord_\ell(k)$.
		Therefore,  $T \subseteq A$.
		%
		%

	\textbf{Second inclusion.}
		Let $d \in \Divs(\ord_\ell(k))$.
		We consider the mapping $n \mapsto k^d n$ on $\Z/(\ell\Z)$, which is a permutation because $k^d$ is invertible modulo $\ell$.
		This permutation has finite order, and the orbit of $n$ under this mapping is the set $\{n, k^d n, k^{2d} n, \ldots\}$.
		In particular, the orbit of $0$ is $\{0\}$.
		Let $X_0=\{0\}, X_1, \ldots,X_m$ be the distinct orbits.
		These sets  partition $\Z/(\ell\Z)$ (for an illustration,  see \cref{ex: partition}).
		Define the sequence $s$ on the alphabet $\{0,1,\ldots,m\}$ by
		\begin{equation}\label{eq: special-s-dl}
		s(n)=i \quad \text{ where } (n\bmod{\ell})\in X_i.
		\end{equation}
		Then $s$ is periodic since $s(n)=s(n+\ell)$.
		Moreover, $s$ is in $\Per(\ell)$ since $0$ only occurs once in the period, namely at positions congruent to $0$ modulo $\ell$.
		To prove that $A\subseteq T$, it remains to show that $\rank_k(s)=d\ell$.

		First, the sequence $(B_{k,\ell,e}(s))_{e \ge 0}$ defined previously is periodic with period length at most $d$ since $B_{k,\ell,e}(s)=B_{k,\ell,e+d}(s)$ by definition of the orbits.
		We prove that the period length is exactly $d$.
		Let $e,e'\in\{0,1,\ldots,d-1\}$ such that $B_{k,\ell,e}(s) = B_{k,\ell,e'}(s)$; we show $e = e'$.
		Since $0$ only occurs once in the period of $s$, there is precisely one sequence in $B_{k,\ell,e}(s)$ whose initial term is $0$, namely $s(k^e n)\n$.
		Therefore, $s(k^e n)\n = s(k^{e'} n)\n$.
		In particular, $s(k^{e})=s(k^{e'})$,  so $k^{e},k^{e'}\in X_i$ for some $i \neq 0$.
		By definition of $X_i$, $k^{e+jd} \equiv k^{e'} \mod\ell$ for some $j\ge 0$.
		So $e+jd \equiv e' \mod \ord_\ell(k)$, which in turn implies $e \equiv e' \mod d$,  yielding in turn $e = e'$.
		Therefore,  $\rank_k(s)=d\ell$,  as desired.
		
		The second part of the statement follows by \cref{cor: upper bound tight aut}.
		\qed
\end{proof}

Notice that the previous proof is constructive,
that is,
for each divisor $d$ of $\ord_\ell(k)$,
Equation~\eqref{eq: special-s-dl} defines a sequence whose rank is~$d\ell$.

\begin{table}[t]
	\caption{For each possible value of the rank of periodic sequences in $\Per(7)$,  we provide a sequence with such rank.}
	\label{tab:values-rank-aut}
	\[
		\begin{array}{ccccc}
			d & \text{rank} & 3^d \bmod{7} & \text{partition of } \Z/(7\Z) & \text{period}\\
			\hline
			6 & 42 & 1 & \{0\},\{1\},\{2\},\{3\},\{4\},\{5\},\{6\} & (0,1,2,3,4,5,6) \\
			3 & 21 & 6 & \{0\},\{1,6\},\{2,5\},\{3,4\} & (0,1,2,3,3,2,1) \\
			2 & 14 & 2 & \{0\},\{1,2,4\},\{3,5,6\} & (0,1,1,2,1,2,2) \\
			1 & 7 & 3 & \{0\},\{1,2,3,4,5,6\} & (0,1,1,1,1,1,1)
		\end{array}
	\]
\end{table}

\begin{example}\label{ex: partition}
	We have $\ord_7(3)=6$,  so the muggle numbers are in the set $\{7d : d \in \Divs(\ord_7(3))\}=\{7, 14, 21, 42\}$.
	We provide the period of a periodic sequence in~$\Per(7)$ with each rank as shown in \cref{tab:values-rank-aut}.
\end{example}

\begin{remark}
	Neder~\seq{A217519} conjectured that if $s_\ell$ is the sequence with period $(0,1,\ldots,\ell-1)$,  then
	``$\rank_2(s_{\ell}) \leq \ell (\ell-1)$, with equality if and only if $\ell$ is a prime with primitive root $2$''. 
	Recall that a \emph{primitive root} of a prime $\ell$ is an integer $k$ such that $\ord_\ell(k)=\ell-1$.
	Neder's conjecture follows from \cref{thm: characterization of sizes for automatic sequences coprime case} and holds even when $k\neq 2$.
\end{remark}

\section{Constant-Recursive Sequences}\label{sec: constant-recursive sequences}

An integer sequence $s$ is \emph{constant-recursive} if there exist~$c_0, c_1, \dots, c_{d - 1} \in \Z$,
with~$d\in\N$ and~$c_0\neq 0$ such that
\begin{equation}\label{eq: constant-recursive recurrence}
	s(n + d) = c_{d - 1} s(n + d - 1) + \dots + c_1 s(n + 1) + c_0 s(n)
\end{equation}
for all $n \ge 0$.
For a constant-recursive sequence $s$,  we let $\rank(s)$ denote its \emph{rank},
which is the smallest integer $d$ such that $s$ satisfies a recurrence relation of the form of Equation~\eqref{eq: constant-recursive recurrence}.
In other words,  if the set $\{s(n+ j )\n \colon j\ge 0\}$ containing all shifts of 
$s$ is called the \emph{$1$-kernel} of $s$,  then the rank of $s$ is the dimension of the $\Q$-vector space generated by its $1$-kernel.

For an integer $\ell\ge 2$ and the family of periodic sequences $\Per(\ell)$,
we consider the following property:
an integer satisfies $\mathrm{P}_{\text{cr}}(\ell)$ if and only if it is equal to the rank of some sequence in~$\Per(\ell)$.
The first few possible values of $\rank(s)$ for $s\in\Per(\ell)$, with~$\ell \in\{ 2,\ldots,15\}$, are given in \cref{tab:values-rank-cr}.
\begin{table}
	\caption{First few values of $\rank(s)$ when $s\in\Per(\ell)$ for~$\ell \in\{ 2,\ldots,15\}$.}
	\label{tab:values-rank-cr}
	\[
		\begin{array}{c|c||c|c}
			\ell & \rank(s),  s \in \Per(\ell) & \ell & \rank(s),  s \in \Per(\ell) \\ \hline
			2 & 1,2 & 9 & 6,7,8,9 \\
			3 & 2,3 &  10 & 4,5,6,8,9,10 \\
			4 & 2,3,4 & 11 & 10,11 \\
			5 & 4,5 &  12 & 4,5,6,7,8,9,10,11,12 \\
			6 & 2,3,4,5,6 &  13 & 12,13 \\
			7 & 6,7 & 14 & 6,7,8,12,13,14 \\
			8 & 4,5,6,7,8 & 15 & 6,7,8,9,10,11,12,13,14,15
		\end{array}
	\]
\end{table}

The strategy adopted for~$\mathrm{P}_{\text{cr}}(\ell)$ is the following: we provide first the range in \cref{pro: upper bound tight const-rec},  then a full characterization of the magic and muggle numbers  in \cref{thm: allowed dimensions}.
To state them, 
we need some definitions and results on~\emph{cyclotomic polynomials}.
For further details on this topic we address the reader to the nice reference~\cite{Garrett}.
\begin{definition}
	Let $m\ge 1$.
	The \emph{$m$th cyclotomic polynomial} is the monic polynomial $\Phi_m$ whose roots in $\mathbb{C}$ are the primitive $m$th roots of unity.
\end{definition}

The first few cyclotomic polynomials are $\Phi_1(x)=x-1$,  $\Phi_2(x)=x+1$,  $\Phi_3(x)=x^2+x+1$,  $\Phi_4(x)=x^2+1$,  $\Phi_5(x)=x^4+x^3+x^2+x+1$,  and $\Phi_6(x)=x^2-x+1$.
For all $m\ge 2$, $\Phi_m(x)$ is palindromic.
We also have the following identity: $x^n-1 =\prod_{m \in \Divs(n)} \Phi_m(x)$ for all $n\ge 1$.
The \emph{Euler totient function}~$\phi$ maps every positive integer~$n$ to the number of positive integers less than~$n$ and relatively prime to $n$.

\begin{lemma}[Garrett~\cite{Garrett}]\label{lem: degree cyclotomic}
	For all $m\ge 1$, $\Phi_m(x)$ is monic of degree~$\phi(m)$.
\end{lemma}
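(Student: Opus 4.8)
The plan is to prove \cref{lem: degree cyclotomic}, the statement that $\Phi_m(x)$ is monic of degree $\phi(m)$ for all $m \ge 1$, directly from the definition of $\Phi_m$ as the monic polynomial whose complex roots are exactly the primitive $m$th roots of unity. Monicity is immediate by definition, so the real content is the degree computation: I must show that the number of primitive $m$th roots of unity equals $\phi(m)$.

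First I would recall that the $m$th roots of unity are precisely the complex numbers $\zeta_m^a = e^{2\pi i a/m}$ for $0 \le a \le m-1$, of which there are exactly $m$, and they are distinct. A root $\zeta_m^a$ is called \emph{primitive} when its multiplicative order is exactly $m$, equivalently when $\zeta_m^a$ is not itself a $d$th root of unity for any proper divisor $d$ of $m$. The key elementary fact is that the order of $\zeta_m^a$ equals $m/\gcd(a,m)$. Hence $\zeta_m^a$ has order exactly $m$ if and only if $\gcd(a,m) = 1$. Since the primitive $m$th roots of unity are in bijection with the integers $a \in \{0, 1, \ldots, m-1\}$ satisfying $\gcd(a,m) = 1$, and the number of such integers is by definition $\phi(m)$ (using the convention stated just before the lemma, noting that for $m = 1$ the single root $1$ is counted), there are exactly $\phi(m)$ primitive $m$th roots of unity. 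Because $\Phi_m$ is the monic polynomial having these roots, each appearing as a simple root, its degree is $\phi(m)$.

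An alternative route, which I would mention as a cleaner global argument, uses the factorization identity stated in the excerpt, namely $x^n - 1 = \prod_{m \in \Divs(n)} \Phi_m(x)$. Taking degrees on both sides gives $n = \sum_{m \in \Divs(n)} \deg \Phi_m(x)$ for every $n \ge 1$. Comparing this with the classical divisor-sum identity $n = \sum_{m \in \Divs(n)} \phi(m)$ and proceeding by strong induction on $n$ forces $\deg \Phi_m = \phi(m)$ for each $m$; the base case $m = 1$ is $\Phi_1(x) = x - 1$ of degree $\phi(1) = 1$, and for the inductive step one cancels the terms $\deg \Phi_d = \phi(d)$ for proper divisors $d$ of $n$ to isolate $\deg \Phi_n = \phi(n)$. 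This version has the advantage of not requiring the order-counting argument explicitly, but it presupposes the factorization identity, which itself ultimately rests on the same root-counting.

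The main obstacle, such as it is, is not any deep difficulty but rather the care needed in the bijection step: one must verify that the map $a \mapsto e^{2\pi i a/m}$ restricted to $\{a : \gcd(a,m) = 1, \, 0 \le a < m\}$ is a well-defined bijection onto the set of primitive $m$th roots of unity, which amounts to the order formula $\ord(\zeta_m^a) = m/\gcd(a,m)$. Since \cref{lem: degree cyclotomic} is attributed to Garrett~\cite{Garrett} and serves only as a cited input for the subsequent development, the cleanest presentation is to invoke the standard order-counting argument of the first paragraph and regard the detailed verification as routine.
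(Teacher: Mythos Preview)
Your argument is correct and entirely standard. However, there is nothing to compare: the paper does not prove \cref{lem: degree cyclotomic} at all but simply quotes it from Garrett~\cite{Garrett} as a known input, so your write-up already goes beyond what the paper provides.
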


The \emph{additive version} $\psi$ of the Euler totient function $\phi$ is defined as follows:
\begin{itemize}[nosep]
	\item for every prime $p$ and every $k\ge 1$, $\psi(p^k) = \phi(p^k)$;
	\item if $n$ is odd,  then $\psi(2n)=\phi(n)$; if $m$ and $n$ are relatively prime and not equal to $2$, then $\psi(m n)=\phi(m)+\phi(n)$.
\end{itemize}

\begin{restatable}{proposition}{UpperBoundCR}
	\label{pro: upper bound tight const-rec}
	Let $\ell\ge 2$.
	The interval $[\psi(\ell),\ell]$ is the range for $\mathrm{P}_{\text{cr}}(\ell)$.
	Moreover,
	the sequence~$s \in \Per(\ell)$ with period~$(0, 0, \dots, 0, 1)$ satisfies~$\rank(s)=\ell$.
\end{restatable}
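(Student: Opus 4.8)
The plan is to reduce the computation of $\rank(s)$ for $s\in\Per(\ell)$ to a statement about cyclotomic polynomials and then to an elementary optimization over $\Divs(\ell)$. First I would recall that $\rank(s)$ equals the degree of the minimal monic polynomial $p_s$ annihilating $s$ under the shift, and that every $s\in\Per(\ell)$ is annihilated by $x^\ell-1=\prod_{m\in\Divs(\ell)}\Phi_m$. Working over $\Q$, the shift acting on the $\ell$-dimensional space of sequences satisfying $s(n+\ell)=s(n)$ has separable minimal polynomial $x^\ell-1$, so this space decomposes as $\bigoplus_{m\in\Divs(\ell)}V_m$, where $V_m$ is the kernel of $\Phi_m$ applied to the shift and $\dim V_m=\phi(m)$ by \cref{lem: degree cyclotomic}. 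Since each $\Phi_m$ is irreducible over $\Q$, a sequence with nonzero component in $V_m$ contributes exactly the factor $\Phi_m$ to $p_s$; hence $p_s=\prod_{m\in S}\Phi_m$ and $\rank(s)=\sum_{m\in S}\phi(m)$, where $S=\{m\in\Divs(\ell):\text{the $V_m$-component of $s$ is nonzero}\}$. The period length of $s$ is the least $d$ with $p_s\mid x^d-1$, namely $\lcm(S)$, so $s\in\Per(\ell)$ precisely when $\lcm(S)=\ell$. This yields
\[
\{\rank(s):s\in\Per(\ell)\}=\Bigl\{\textstyle\sum_{m\in S}\phi(m):\ S\subseteq\Divs(\ell),\ \lcm(S)=\ell\Bigr\},
\]
each admissible $S$ being realized by choosing a nonzero rational vector in each $V_m$ for $m\in S$ and scaling $s$ to integer values.

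Given this description, the upper endpoint is immediate: every $s\in\Per(\ell)$ has $\rank(s)=\sum_{m\in S}\phi(m)\le\sum_{m\in\Divs(\ell)}\phi(m)=\ell$, and the value $\ell$ is attained by $S=\Divs(\ell)$. For the \emph{moreover} statement I would argue directly that the sequence with period $(0,\dots,0,1)$ has rank $\ell$: its shifts $s(n+j)\n$ for $0\le j\le\ell-1$ are the indicator sequences of the distinct residue classes modulo $\ell$, hence linearly independent in $\Q^\ell$, so the $1$-kernel spans the whole space and $\rank(s)=\ell$.

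The lower endpoint amounts to $\min_{\lcm(S)=\ell}\sum_{m\in S}\phi(m)=\psi(\ell)$. For achievability I would exhibit an explicit optimal set: writing $\ell=2^{a_0}\prod_{i=1}^{r}p_i^{a_i}$ with odd primes $p_i$, take the odd prime powers $p_i^{a_i}$, together with $2^{a_0}$ when $a_0\ge 2$; if $a_0=1$ and $r\ge 1$, absorb the factor $2$ by replacing $p_1^{a_1}$ with $2p_1^{a_1}$, using $\phi(2p_1^{a_1})=\phi(p_1^{a_1})$. In each regime the resulting $S$ has $\lcm(S)=\ell$ and total $\sum_{m\in S}\phi(m)=\psi(\ell)$, matching the case distinction defining $\psi$. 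For the matching lower bound I would use a charging argument: for each prime power $p^{a_p}$ exactly dividing $\ell$, pick a witness $m_p\in S$ with $p^{a_p}\mid m_p$, which exists since $\lcm(S)=\ell$. Because $\phi$ is multiplicative and $\phi(q)\ge 2$ for every prime power $q\neq 2$, each $m\in S$ satisfies $\phi(m)=\prod_{q\,\|\,m}\phi(q)$, which is at least the sum of the values $\phi(p^{a_p})$ over the full prime powers witnessed by $m$ (discarding the harmless factor $\phi(2)=1$ when $2\,\|\,\ell$). Summing over $m\in S$ and noting that each prime power is witnessed once gives $\sum_{m\in S}\phi(m)\ge\psi(\ell)$.

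The main obstacle is this lower bound. The inequality ``a product of prime-power values dominates their sum'' holds only because odd prime powers have $\phi\ge 2$, and the sole exception $\phi(2)=1$ is exactly what forces the absorption rule in the definition of $\psi$; handling the $2$-part correctly, distinguishing $2\nmid\ell$, $2\,\|\,\ell$, and $4\mid\ell$, is the delicate step. Everything else, namely the cyclotomic decomposition and the two endpoint computations via $\sum_{m\mid\ell}\phi(m)=\ell$ and linear independence of shifts, is routine.
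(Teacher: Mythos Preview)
Your proposal is correct and takes a genuinely different route from the paper. The paper's proof is very short: the upper bound is immediate from $s(n+\ell)=s(n)$, the sequence $(0,\dots,0,1)$ witnesses rank $\ell$, and the lower bound $\psi(\ell)$ is obtained by invoking Hiller's theorem (\cref{thm: hiller}) as a black box, together with the implicit link between $\rank(s)$ and the order of the companion matrix of its minimal recurrence. You instead work entirely inside the cyclotomic decomposition of the $\ell$-periodic shift space, identify $\{\rank(s):s\in\Per(\ell)\}$ with $\{\sum_{m\in S}\phi(m):S\subseteq\Divs(\ell),\ \lcm(S)=\ell\}$, and then solve the resulting optimization by an explicit construction plus a product-versus-sum charging argument. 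This is self-contained and, as a by-product, essentially proves \cref{thm: allowed dimensions} as well; the paper, by contrast, gets a shorter proof at the cost of importing Hiller's result. Your handling of the $2$-part (absorbing $\phi(2)=1$ when $2\parallel\ell$ and treating $4\mid\ell$ separately) is exactly what makes the charging inequality go through, and it matches the case split in the definition of~$\psi$.
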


\begin{theorem}\label{thm: allowed dimensions}
	Let $\ell \ge 2$.
	Let $S$ be the set of non-empty sets $\{d_1, d_2, \dots, d_j\}$ of non-negative pairwise distinct integers such that $\lcm(d_1, d_2, \dots, d_j) = \ell$. 
	The set $R' = \{\sum_{i = 1}^j \phi(d_i) : \{d_1, d_2, \dots, d_j\} \in S\}$
	is the set of muggle numbers in the range for $\mathrm{P}_{\text{cr}}(\ell)$.
	In particular,
	the set of magic numbers for $\mathrm{P}_{\text{cr}}(\ell)$ is $\N\setminus R'$.
\end{theorem}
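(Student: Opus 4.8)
The plan is to characterize the rank of a periodic sequence $s \in \Per(\ell)$ in terms of the factorization of the polynomial $x^\ell - 1$ over $\Q$. The key structural fact is that a constant-recursive sequence satisfying a linear recurrence with characteristic polynomial $p(x)$ corresponds to the quotient ring $\Q[x]/(p(x))$, and the rank of $s$ equals the degree of its \emph{minimal polynomial}, that is, the monic polynomial $\mu_s(x)$ of least degree such that the recurrence encoded by $\mu_s$ annihilates $s$. Since every sequence in $\Per(\ell)$ satisfies $s(n+\ell)=s(n)$, it is annihilated by $x^\ell - 1$, so $\mu_s(x)$ divides $x^\ell - 1$. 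Using the factorization $x^\ell - 1 = \prod_{m \in \Divs(\ell)} \Phi_m(x)$ into distinct irreducible cyclotomic factors, every divisor of $x^\ell-1$ is a product of a subset of the $\Phi_m$'s. Hence $\mu_s(x) = \prod_{m \in D} \Phi_m(x)$ for some $D \subseteq \Divs(\ell)$, and by \cref{lem: degree cyclotomic} its degree is $\rank(s) = \sum_{m \in D} \phi(m)$.

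First I would make precise which subsets $D \subseteq \Divs(\ell)$ can actually arise as the set of cyclotomic factors of the minimal polynomial of some $s \in \Per(\ell)$. The period length of $s$ being \emph{exactly} $\ell$ (not a proper divisor) is the constraint that singles out the admissible $D$: the sequence $s$ has period length equal to $\ell$ if and only if $\mu_s(x)$ does not divide $x^d - 1$ for any proper divisor $d$ of $\ell$, which translates to the condition $\lcm\{m : m \in D\} = \ell$. Indeed, $\mu_s \mid x^d - 1$ exactly when every $m \in D$ divides $d$, i.e.\ when $\lcm(D) \mid d$; so period length exactly $\ell$ forces $\lcm(D) = \ell$, and conversely any such $D$ is realizable by taking $s$ to be (say) the sum of one sequence from each eigenspace corresponding to the primitive $m$th roots of unity for $m \in D$. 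Reindexing $D = \{d_1, \dots, d_j\}$, this is precisely the indexing set $S$ of \cref{thm: allowed dimensions}, and the rank $\sum_{i=1}^j \phi(d_i)$ is exactly an element of $R'$.

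The two inclusions then read as follows. For $\{\rank(s) : s \in \Per(\ell)\} \subseteq R'$: given $s \in \Per(\ell)$, set $D$ to be the cyclotomic support of $\mu_s$; the exact-period-length hypothesis gives $\lcm(D) = \ell$, so $D \in S$ and $\rank(s) = \sum_{m \in D}\phi(m) \in R'$. For $R' \subseteq \{\rank(s) : s \in \Per(\ell)\}$: given $\{d_1,\dots,d_j\} \in S$, I would construct an explicit $s$ whose minimal polynomial is exactly $\prod_i \Phi_{d_i}$. The cleanest construction is to work over $\C$: choose, for each $d_i$, a primitive $d_i$th root of unity $\zeta_i$ and let $s$ be the real/rational sequence obtained by summing the Galois orbit of $\sum_i \zeta_i^n$ (equivalently, pick $s(n) = \sum_{i} \mathrm{tr}(\zeta_i^n)$ where the trace is over $\Q$), guaranteeing each $\Phi_{d_i}$ genuinely divides $\mu_s$ while no other cyclotomic factor appears; its period length is $\lcm(d_1,\dots,d_j)=\ell$ by construction.

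The main obstacle I expect is the second inclusion, specifically verifying that the constructed sequence has minimal polynomial \emph{exactly} $\prod_i \Phi_{d_i}$ and not a proper divisor—i.e.\ ensuring that no cyclotomic factor $\Phi_{d_i}$ silently drops out because the corresponding root's coefficient vanishes in the spectral (Fourier-like) decomposition of $s$. This requires checking that each primitive $d_i$th root of unity really appears with nonzero coefficient in the representation $s(n) = \sum_{\zeta} a_\zeta\, \zeta^n$ over the $\ell$th roots of unity, which is a linear-independence argument; the Galois-orbit summation handles the subtlety of keeping everything in $\Q$ while forcing whole cyclotomic blocks to appear or not appear together. A cleaner, more combinatorial alternative avoiding explicit roots would be to define $s$ directly via its discrete Fourier transform, setting the frequency-$a$ coefficient to be nonzero precisely when $\gcd(a,\ell) = \ell/d_i$ for some $i$ (the primitive-$d_i$th-root frequencies), and then invoking the standard correspondence between the vanishing pattern of the DFT and the cyclotomic factorization of $\mu_s$; this reduces the whole theorem to the elementary bookkeeping that $\lcm(d_1,\dots,d_j)=\ell$ matches "exact period length $\ell$."
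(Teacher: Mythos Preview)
Your approach is correct and is a genuinely different (and arguably more direct) route than the paper's. The paper proves \cref{thm: allowed dimensions} by first establishing an intermediate matrix-theoretic characterization (\cref{thm: periodic seq matrices order}): the muggle numbers are exactly the sizes $d$ for which $\GL_d(\Z)$ contains a \emph{simple} matrix of order $\ell$. It then invokes Koo's classification of finite-order matrices (\cref{thm: cyclotomic block-diagonal}) to translate this into the cyclotomic description $d=\sum_i \phi(d_i)$ with $\lcm(d_i)=\ell$. You bypass the matrix layer entirely and work with the minimal polynomial $\mu_s$ of the sequence directly: $\mu_s \mid x^\ell-1$ forces $\mu_s=\prod_{m\in D}\Phi_m$ for some $D\subseteq\Divs(\ell)$, and the exact-period-$\ell$ condition is equivalent to $\lcm(D)=\ell$. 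This is cleaner for this specific theorem; the paper's detour through matrices is justified mainly because it reuses machinery (\cref{pro: periods for a given recurrence}, \cref{lem: finite order}) and connects to Hiller's theorem for the lower bound in \cref{pro: upper bound tight const-rec}.

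For the realizability direction, the paper's construction is simpler than your trace/DFT idea and sidesteps exactly the obstacle you flag. Given $D=\{d_1,\dots,d_j\}$ with $\lcm(D)=\ell$, set $\chi(x)=\prod_i\Phi_{d_i}(x)$ (degree $d$) and take $s$ to be the unique solution of the corresponding recurrence with initial conditions $s(0)=\cdots=s(d-2)=0$, $s(d-1)=1$. Then $\rank(s)=d$ is immediate: any recurrence of order $d'<d$ with those initial zeros forces the zero sequence. That $s\in\Per(\ell)$ follows from \cref{pro: periods for a given recurrence}, since the companion matrix of $\chi$ has order $\lcm(d_1,\dots,d_j)=\ell$. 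This avoids having to check that no cyclotomic block ``silently drops out'' of a spectral sum. Your Ramanujan-sum construction $s(n)=\sum_i c_{d_i}(n)$ also works (each primitive $d_i$th root appears with coefficient $1$, so $\mu_s=\prod_i\Phi_{d_i}$ exactly, and the values are integers), but the $(0,\dots,0,1)$ trick gets there with less verification.
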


The range and the characterization of the magic and muggle numbers with respect to~$\mathrm{P}_{\text{cr}}(\ell)$ from \cref{pro: upper bound tight const-rec} and \cref{thm: allowed dimensions} will follow from intermediate results involving sizes of (simple) matrices with order~$\ell$.
As usual,
we let $\GL_d(\Z)$ denote the \emph{general linear group of degree $d$ over~$\Z$},
made of invertible matrices of size $d$.
Note that for every matrix in $\GL_d(\Z)$,  its determinant is $\pm 1$.
Given a matrix $M\in \GL_d(\Z)$,  its \emph{order} $\ord M$ is the smallest positive integer $m$ such that $M^m=I$,
where~$I$ is the identity matrix of size $d$,  or $+\infty$ if such an integer does not exist.
We say that a matrix $M$ is \emph{simple} if all its eigenvalues (in $\C$) have algebraic multiplicity $1$.

\subsection{Proof of \cref{pro: upper bound tight const-rec}}

In fact,  with \cref{thm: hiller},
Hiller showed that the minimal size of matrices of a given order corresponds to a specific value of the function $\psi$.
Therefore,  it gives the lower bound in \cref{pro: upper bound tight const-rec} (see \seq{A080737,  A152455}).

\begin{theorem}[Hiller~\cite{Hiller}]
	\label{thm: hiller}
	The smallest integer $d \ge 1$ such that $\GL_d(\Z)$ contains a matrix of order $\ell$ is $\psi(\ell)$. 
\end{theorem}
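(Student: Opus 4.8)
The plan is to translate the group-theoretic question into the language of characteristic polynomials and cyclotomic factorization, reduce it to a purely number-theoretic minimization, and then evaluate that minimum. First I would record the structure of finite-order integer matrices. If $M \in \GL_d(\Z)$ has order $\ell$, then $M^\ell = I$, so the minimal polynomial of $M$ divides $x^\ell - 1$. Using the identity $x^\ell - 1 = \prod_{m \in \Divs(\ell)} \Phi_m(x)$ together with the irreducibility of each $\Phi_m$ over $\Q$, the minimal polynomial (rational, monic, and squarefree as a divisor of the squarefree $x^\ell - 1$) must equal $\prod_{m \in D} \Phi_m$ for some subset $D \subseteq \Divs(\ell)$. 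Since $x^\ell - 1$ is squarefree, $M$ is diagonalizable over $\C$ and its eigenvalues are exactly the primitive $m$-th roots of unity for $m \in D$; hence $\ord M$ is the lcm of the orders of the eigenvalues, namely $\lcm(D)$. Therefore the requirement $\ord M = \ell$ is equivalent to $\lcm(D) = \ell$.

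Next I would extract a lower bound on $d$ and then match it with an explicit construction. The characteristic polynomial shares its irreducible factors with the minimal polynomial, so $\chi_M = \prod_{m \in D} \Phi_m^{b_m}$ with each $b_m \ge 1$, and by \cref{lem: degree cyclotomic} we get $d = \deg \chi_M = \sum_{m \in D} b_m\, \phi(m) \ge \sum_{m \in D} \phi(m)$. This yields $d \ge \min\{\sum_{m \in D} \phi(m) : D \subseteq \Divs(\ell),\ \lcm(D) = \ell\}$. For the reverse inequality I would take, for any admissible $D$, the block-diagonal matrix $\bigoplus_{m \in D} C_m$ where $C_m$ is the companion matrix of $\Phi_m$. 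Each $C_m$ has integer entries (as $\Phi_m$ is monic with integer coefficients) and determinant $\pm 1$ (since $\Phi_m(0) = \pm 1$), hence lies in $\GL_{\phi(m)}(\Z)$; its minimal and characteristic polynomials both equal $\Phi_m$, so it has order exactly $m$. The block matrix then lies in $\GL_{d'}(\Z)$ with $d' = \sum_{m \in D} \phi(m)$ and has order $\lcm(D) = \ell$. Combining the two directions, the minimal dimension is exactly $\min\{\sum_{m \in D} \phi(m) : \lcm(D) = \ell\}$, and it remains to identify this quantity with $\psi(\ell)$.

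The final and most delicate step is the number-theoretic evaluation of this minimum. Writing $\ell = \prod_i p_i^{a_i}$, the constraint $\lcm(D) = \ell$ means each maximal prime power $p_i^{a_i}$ divides some element of $D$. Since $\phi$ is multiplicative, splitting a composite element into its prime-power parts replaces a single term $\phi\!\left(\prod_j q_j^{b_j}\right) = \prod_j \phi(q_j^{b_j})$ by the sum $\sum_j \phi(q_j^{b_j})$, which never increases the total and strictly decreases it unless one of the factors equals $2$. Hence an optimal $D$ may be taken to consist of prime-power divisors, with the sole exception that the single factor $2$ (present when $2 \parallel \ell$) should be absorbed into another prime-power element $q^b$, as $\phi(2 q^b) = \phi(q^b)$ recovers the otherwise wasted contribution $\phi(2) = 1$. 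Carrying out this bookkeeping matches the value $\psi(\ell)$: the base value $\phi(p^k)$ on prime powers, the free absorption of the factor $2$, and additivity over the remaining coprime parts. I expect this step to be the main obstacle, since it requires carefully justifying that no further merging of factors helps and that the exceptional behavior of the prime $2$ is correctly accounted for; by contrast, the matrix-theoretic reductions in the earlier steps are comparatively routine once the cyclotomic factorization of $x^\ell - 1$ is in hand.
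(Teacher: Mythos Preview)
The paper does not prove this theorem; it is quoted from Hiller and used as a black box (the only thing the paper does with it is invoke it in the one-line proof of \cref{pro: upper bound tight const-rec}). So there is no in-paper argument to compare your proposal against.

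Your outline is the standard proof and is essentially sound: reduce to the minimal polynomial being a squarefree product $\prod_{m\in D}\Phi_m$ with $\lcm(D)=\ell$, bound $d\ge\sum_{m\in D}\phi(m)$ via the characteristic polynomial, realize any admissible $D$ by a block-diagonal companion construction, and then minimize $\sum_{m\in D}\phi(m)$ over $D$. One wording slip to fix: the claim that splitting a composite element into its prime-power parts ``never increases the total'' is false exactly in the case you then single out---if one part is $2$, then $\phi(2q^b)=\phi(q^b)<1+\phi(q^b)$, so splitting \emph{increases} the sum. You clearly see this, since you immediately describe re-absorbing the factor $2$; just reorder the argument so it reads correctly (first split off all odd prime-power parts, then treat the $2$-part by merging it with another factor when $2\parallel\ell$). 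A second caution: the recursive definition of $\psi$ printed in the paper appears to have $\phi$ where $\psi$ is intended in the last two clauses (e.g.\ it gives $\psi(30)=\phi(15)=8$, whereas the correct value is $6$); make sure your final identification is with the intended additive totient, not the literal text.
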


\begin{proof}[of  \cref{pro: upper bound tight const-rec}]
	\label{proof: pro: upper bound tight const-rec}
Every sequence $s$ in $\Per(\ell)$ is constant-recursive since $s(n+\ell)=s(n)$ for all $n \ge 0$.
Hence its rank is at most $\ell$. 
The lower bound on the rank follows directly from \cref{thm: hiller}.
Rank $\ell$ is achieved by the sequence with period $(0, 0, \dots, 0, 1)$.
\qed
\end{proof}

\subsection{Proof of \cref{thm: allowed dimensions}}

The proof of \cref{thm: allowed dimensions} will involve several results on matrices of finite order,  the first of which giving an alternative characterization of muggle numbers with respect to $\mathrm{P}_{\text{cr}}(\ell)$.

\begin{theorem}\label{thm: periodic seq matrices order}
	Let $\ell \ge 2$.
	The set
	\[
		R=\left\{d \in\N : \text{there exists a simple matrix $M \in \GL_d(\Z)$ such that $\ord M = \ell$} \right\}
	\]
	is the set of muggle numbers in the range for $\mathrm{P}_{\text{cr}}(\ell)$.
\end{theorem}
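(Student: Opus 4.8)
The plan is to prove the set equality $R = \{\rank(s) : s \in \Per(\ell)\}$. Since every such rank lies in $[\psi(\ell),\ell]$ by \cref{pro: upper bound tight const-rec}, the qualifier ``in the range'' is automatic, and the right-hand side is by definition the set of muggle numbers for $\mathrm{P}_{\text{cr}}(\ell)$. I would prove the two inclusions separately, the bridge in both directions being the shift operator $\sigma \colon s(n)\n \mapsto s(n+1)\n$ acting on the $\Q$-vector space $V$ spanned by the $1$-kernel of $s$.

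For the inclusion $\{\rank(s) : s \in \Per(\ell)\} \subseteq R$, fix $s \in \Per(\ell)$ with $\rank(s) = d$. Since $V = \Q[\sigma]\,s$ is spanned by the shifts of $s$, it is a cyclic $\Q[\sigma]$-module, so the characteristic polynomial of $\sigma|_V$ coincides with the minimal polynomial of $s$; the latter divides $x^\ell - 1$ because $(\sigma^\ell - \mathrm{id})s = 0$, and $x^\ell - 1$ is squarefree, so $\sigma|_V$ has $d$ distinct eigenvalues and is simple. To obtain an \emph{integer} matrix I would pass to the lattice $L \subseteq \Z^\ell$ generated by the shifts $s(n+j)\n$ for $0 \le j \le \ell - 1$, identifying each periodic sequence with its period vector. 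This $L$ is free of rank $d$, and since $\sigma$ permutes these generators cyclically it satisfies $\sigma(L) = L$, so in any $\Z$-basis of $L$ the operator $\sigma|_L$ is represented by a matrix $M \in \GL_d(\Z)$. Its order is the least $m$ with $s(n+m)\n = s(n)\n$, namely the period length $\ell$, and it is simple by the eigenvalue computation above; hence $d \in R$.

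For the reverse inclusion $R \subseteq \{\rank(s) : s \in \Per(\ell)\}$, start from a simple $M \in \GL_d(\Z)$ with $\ord M = \ell$ and build a sequence of the form $s(n) = \lambda(M^n v)$ for a vector $v$ and a functional $\lambda$ to be chosen. Any such $s$ is integer-valued and satisfies $s(n+\ell) = s(n)$ because $M^\ell = I$, so its period length divides $\ell$. The point is to force $\rank(s) = d$: I would take $v$ to be a cyclic vector for $M$ (which exists since $M$, having distinct eigenvalues, is nonderogatory) and choose $\lambda$ dual to the top vector of the Krylov basis $v, Mv, \dots, M^{d-1}v$, so that $s(0) = \dots = s(d-2) = 0$ and $s(d-1) = 1$. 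Then the $d \times d$ Hankel matrix $[\,s(i+j)\,]_{0 \le i,j \le d-1}$ is anti-triangular with $1$'s on its anti-diagonal, hence invertible, which shows the $d$ shifts of $s$ are linearly independent and $\rank(s) = d$. Since the minimal polynomial of $s$ then equals the squarefree characteristic polynomial of $M$, all eigenvalues of $M$ occur, so the period length of $s$ is the $\lcm$ of their orders, which is exactly $\ord M = \ell$; after clearing denominators this yields the desired $s \in \Per(\ell)$.

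The main obstacle I anticipate is the integrality in the forward direction: representing $\sigma|_V$ over $\Q$ is immediate, but producing a matrix in $\GL_d(\Z)$ rather than merely $\GL_d(\Q)$ requires a $\sigma$-stable lattice, which is exactly what the cyclic permutation of the period vectors supplies. In the reverse direction the delicate point is guaranteeing that the rank is \emph{exactly} $d$, equivalently that no eigenvalue of $M$ is lost when passing to the scalar sequence $s$; the cyclic-vector-plus-anti-triangular-Hankel argument is what pins this down and, simultaneously, keeps the period length from dropping below $\ell$.
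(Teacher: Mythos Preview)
Your argument is correct, and the reverse inclusion is essentially identical to the paper's: both produce the sequence with initial conditions $s(0)=\cdots=s(d-2)=0$, $s(d-1)=1$ satisfying the recurrence whose characteristic polynomial is $\chi_M$, use the leading zeros to force $\rank(s)=d$, and deduce that the period length equals $\ord M=\ell$ from the cyclotomic factorisation. The paper phrases this via the companion matrix of $\chi_M$ and \cref{pro: periods for a given recurrence}; your Krylov/Hankel formulation is just a change of coordinates on the same construction. (One cosmetic point: your ``any such $s$ is integer-valued'' and ``after clearing denominators'' are in tension, since the dual functional $\lambda$ need not be integral; scaling is harmless for rank and period, so just commit to one phrasing.)

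The forward inclusion is where you diverge. You build a $\sigma$-stable lattice $L\subseteq\Z^\ell$ spanned by the shifted period vectors and represent $\sigma|_L$ in a $\Z$-basis to land in $\GL_d(\Z)$. This works, but the paper takes a shorter path: it simply lets $M$ be the companion matrix of the minimal polynomial $\chi$ of $s$. Since $\chi$ divides $x^\ell-1$ over $\Z$ (\cref{lem: factor}), it has integer coefficients and constant term $\pm 1$, so its companion matrix is already in $\GL_d(\Z)$ with no lattice argument needed; simplicity and $\ord M=\ell$ then come from \cref{lem: finite order} and \cref{pro: periods for a given recurrence}. So the obstacle you flag as ``main'' (producing an integer matrix rather than a rational one) dissolves once you notice that the companion matrix is the canonical integral model. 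Your lattice approach has the virtue of being coordinate-free and would survive in settings where no canonical integral form is available, but here it is doing more work than necessary.
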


We now develop the necessary tools to prove both \cref{thm: allowed dimensions,thm: periodic seq matrices order}.
%
Let~$s$ be a constant-recursive sequence satisfying the recurrence relation in Equation~\eqref{eq: constant-recursive recurrence}.
The \emph{characteristic polynomial} of the underlying recurrence is $\chi(x) = x^d - c_{d - 1} x^{d-1} - \cdots - c_1 x - c_0$.
For a monic polynomial~$p(x) = x^\ell + a_{\ell - 1} x^{\ell-1} + \cdots + a_1 x + a_0$,  we let~$C(p)$ denote the \emph{companion matrix} of $p(x)$ given by the matrix of size $\ell$
\[
	\begin{bmatrix}
		0 & 0 & \cdots & 0 & -a_0 \\
		1 & 0 & \cdots & 0 & -a_1 \\ 
		0 & 1 & \cdots & 0 & -a_2 \\ 
		\vdots & \vdots & \ddots & \vdots & \vdots \\ 
		0 & 0 & \cdots & 1 & -a_{\ell - 1} \\ 
	\end{bmatrix}.
\]

The following result links the period length of a periodic sequence satisfying Equation~\eqref{eq: constant-recursive recurrence} and the order of the companion matrix of the previous recurrence.

\begin{restatable}{proposition}{PeriodAndOrdMatrix}
	\label{pro: periods for a given recurrence}
	Let $s$ be a constant-recursive sequence satisfying Equation~\eqref{eq: constant-recursive recurrence},
	let $\chi(x)$ be the characteristic polynomial of the underlying recurrence,
	and let $M \in \GL_d(\Z)$ be the companion matrix of $\chi(x)$.
	Assume that $\ord M$ is finite.
	\begin{enumerate}[nosep]
		\item\label{item: period length is at most ord M}
			The period length of $s$ is at most $\ord M$.
		\item\label{item: period length with 0,0,...,0,1 is ord M}
			Moreover, if the initial conditions are $s(0)=s(1)=\cdots=s(d-2)=0$ and $s(d-1)=1$, then the period length of $s$ is  exactly $\ord M$.
	\end{enumerate}
\end{restatable}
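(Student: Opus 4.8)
The plan is to encode the recurrence as a linear dynamical system driven by the companion matrix. First I would introduce the \emph{state vectors} $v_n = (s(n), s(n+1), \dots, s(n+d-1))^T \in \Q^d$ and check the fundamental relation $M^T v_n = v_{n+1}$: the superdiagonal of $M^T$ shifts the first $d-1$ entries, while its bottom row reproduces exactly Equation~\eqref{eq: constant-recursive recurrence} in the last coordinate. Iterating gives $v_n = (M^T)^n v_0$ for all $n \ge 0$. Since transposition preserves order (as $(M^T)^k = (M^k)^T$), we have $\ord M^T = \ord M$; writing $m$ for this common finite value, $(M^T)^m = I$. For the first claim, this yields $v_{n+m} = (M^T)^{n+m} v_0 = (M^T)^n v_0 = v_n$ for every $n$, and reading off the first coordinate gives $s(n+m) = s(n)$. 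Hence $m$ is a period of $s$, so the period length is at most $\ord M$.

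For the second claim I would first invoke the first claim to get that the period length $p$ satisfies $p \le m$, leaving $p \ge m$ to prove. Under the stated initial conditions the starting state is $v_0 = e_d$, the last standard basis vector, and the crux is that $e_d$ is a \emph{cyclic vector} for $M^T$, meaning that $v_0, v_1, \dots, v_{d-1}$ form a basis of $\Q^d$. This I would verify directly: for $0 \le k \le d-1$ the entries of $v_k$ are $s(k), \dots, s(k+d-1)$, which all vanish before position $d-k$ and equal $1$ at position $d-k$ (because $s(0)=\cdots=s(d-2)=0$ and $s(d-1)=1$); the leading ones thus sit in the distinct positions $d, d-1, \dots, 1$, forcing linear independence. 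Now periodicity gives $v_p = v_0$, that is, $(M^T)^p v_0 = v_0$. Since $(M^T)^p$ commutes with $M^T$, it fixes each $v_k = (M^T)^k v_0$, hence fixes a basis, so $(M^T)^p = I$. Therefore $m = \ord M^T \le p$, and together with $p \le m$ we conclude $p = m = \ord M$.

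The step I expect to demand the most care is the cyclic-vector argument, namely upgrading the single fixed-point relation $(M^T)^p v_0 = v_0$ to the full matrix identity $(M^T)^p = I$. This hinges on the special initial conditions: only because $v_0 = e_d$ generates all of $\Q^d$ under $M^T$ does fixing $v_0$ propagate, via commutation, to fixing an entire basis. I would therefore make the anti-echelon linear-independence computation explicit, and would also flag the bookkeeping of the transpose convention, since the companion matrix as defined in the paper advances the state vector only after transposition (which is harmless here, as the order is transpose-invariant).
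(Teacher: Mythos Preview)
Your proof is correct and follows essentially the same approach as the paper's. The only differences are cosmetic: the paper works with row vectors $\s_n$ and right multiplication by $M$ (so no transpose is needed), and instead of your cyclic-vector/commutation phrasing it stacks $\s_{d-1},\dots,\s_0$ into an upper-triangular matrix $A$ with unit diagonal and deduces $M^p=I$ from $AM^p=A$; this is precisely your anti-echelon linear-independence argument in matrix form.
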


The proof of \cref{pro: periods for a given recurrence} was inspired by a similar result for finite fields~\cite[Proposition~IV.5.1]{Rigo}.
We start with a lemma stating that powers of the companion matrix give access to further terms of the sequence.
\begin{lemma}\label{lem: powers of companion matrix}
Let $s$ be a constant-recursive sequence satisfying Equation~\eqref{eq: constant-recursive recurrence} and let $M$ be the companion matrix of the corresponding characteristic polynomial.
For all $n\ge 0$, we define the vector $\s_n =  \begin{bmatrix}
		s(n) & s(n+1) & \cdots & s(n+d-1) 
\end{bmatrix}$. 
Then for all $m,n\ge 0$, we have $\s_n M^m=\s_{n+m}$.
\end{lemma}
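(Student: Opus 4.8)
The plan is to prove $\s_n M^m = \s_{n+m}$ by induction on $m$, with the base case $m=0$ being trivial since $M^0 = I$ gives $\s_n I = \s_n$. The crux of the argument is the case $m=1$, which captures exactly how the companion matrix encodes one step of the recurrence; once that is established, the general case follows by associativity, since $\s_n M^{m+1} = (\s_n M^m) M = \s_{n+m} M = \s_{n+m+1}$.

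For the case $m=1$, I would compute the row-vector product $\s_n M$ directly from the definition of the companion matrix $C(\chi)$, where $\chi(x) = x^d - c_{d-1} x^{d-1} - \cdots - c_1 x - c_0$, so that in the notation of the companion matrix the entries of the last column are $-a_i$ with $a_0 = -c_0, a_1 = -c_1, \dots, a_{d-1} = -c_{d-1}$. The matrix $M$ has $1$'s on the subdiagonal and its last column records the recurrence coefficients. Multiplying $\s_n = \begin{bmatrix} s(n) & s(n+1) & \cdots & s(n+d-1) \end{bmatrix}$ on the right by $M$, the first $d-1$ entries of the result are simply the shifted entries $s(n+1), s(n+2), \dots, s(n+d-1)$, coming from the subdiagonal $1$'s. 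The last entry is the inner product of $\s_n$ with the final column, namely $c_0 s(n) + c_1 s(n+1) + \cdots + c_{d-1} s(n+d-1)$, which by Equation~\eqref{eq: constant-recursive recurrence} equals $s(n+d)$. Hence $\s_n M = \begin{bmatrix} s(n+1) & \cdots & s(n+d-1) & s(n+d) \end{bmatrix} = \s_{n+1}$, as required.

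The main obstacle here is purely bookkeeping: one must be careful that we are using row vectors multiplied on the right (rather than the more common column-vector-on-the-left convention), and that the sign conventions in the last column of the companion matrix match the signs of the $c_i$ in the recurrence. Getting the indexing right so that each of the first $d-1$ output coordinates reads off the correct shifted term, and so that the final coordinate reproduces the recurrence exactly, is the only place where an error could creep in. No deeper idea is needed; the computation is a direct verification, and the inductive step requires nothing beyond associativity of matrix multiplication.
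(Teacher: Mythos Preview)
Your proof is correct and follows essentially the same approach as the paper: both proceed by induction on $m$, with the base case $m=0$ trivial and the inductive step reducing to the computation that right-multiplication of $\s_j$ by $M$ shifts to $\s_{j+1}$ via the recurrence. The only cosmetic difference is that you isolate the $m=1$ case as a separate lemma-within-the-proof and then invoke it in the induction, whereas the paper performs that same computation of $\s_{n+m}M$ inline during the inductive step.
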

\begin{proof}
	We proceed by induction on $m\ge 0$.
	If $m=0$, the result is clear.
	By the induction hypothesis, we have $\s_n M^{m+1}=\s_{n+m} M$, which is equal to
	\[
		\begin{bmatrix}
				s(n+m+1) & s(n+m+2) & \cdots & s(n+m+d-1) & \sum_{i=0}^{d-1} c_i s(n+m+i)
		\end{bmatrix}.
	\]
	By assumption, $\sum_{i=0}^{d-1} c_i s(n+m+i) = s(n+m+d)$, which implies that $\s_n M^{m+1}=\s_{n+m+1}$, as expected.
\qed
\end{proof}

\begin{proof}[of \cref{pro: periods for a given recurrence}]
Let $\ord M = \ell$.
For all $n\ge 0$, \cref{lem: powers of companion matrix} gives $\s_{n+\ell} = \s_n M^\ell = \s_n$, so $s(n+\ell)=s(n)$ for all $n\ge 0$. 
This in turn implies that the period length of $s$ is at most $\ell$ for all initial conditions.
This proves Item~\eqref{item: period length is at most ord M}.

Let us prove Item~\eqref{item: period length with 0,0,...,0,1 is ord M}.
We show that every period length of the sequence $s$ with initial conditions $s(0)=s(1)=\cdots=s(d-2)=0$ and $s(d-1)=1$ is at least $\ell$.
Let $\{\e_i : 1\le i \le d\}$ denote the standard basis of $\R^d$, that is, for all $i\in\{1,2,\ldots, d\}$, $\e_i$ has a $1$ in the $i$th coordinate and $0$'s elsewhere.
By assumption, we have $\s_0=\e_d$.
We can show by induction that, for all $j\in\{0,1,\ldots, d-1\}$,
\[
	\s_j = \s_0 M^j = \e_{d-j} + \sum_{i=d-j+1}^{d} \alpha_{j,i} \e_i
\]
for some coefficients $\alpha_{i,j}\in \Q$.
Let
\[
A
=
\begin{bmatrix}
\s_{d-1} \\
\s_{d-2} \\
\vdots \\
\s_0
\end{bmatrix}
=
\begin{bmatrix}
1 & \alpha_{d-1,2} & \alpha_{d-1,3} & \cdots & \alpha_{d-1,d} \\
0 & 1 & \alpha_{d-2,2} & \cdots & \alpha_{d-2,d} \\ 
\vdots & \ddots & \ddots & \ddots & \vdots \\ 
0 & \cdots & 0 & 1 & \alpha_{1,d} \\ 
0 & \cdots & 0 & 0 & 1 \\ 
\end{bmatrix}
\in \GL_d(\Q).
\]
Note that $\det(A)=1$.
By \cref{lem: powers of companion matrix} again, we obtain
\begin{equation}\label{eq: powers of M with A}
A M^m
=
\begin{bmatrix}
\s_{d-1} \\
\s_{d-2} \\
\vdots \\
\s_0
\end{bmatrix}
M^m
=
\begin{bmatrix}
\s_{m+d-1} \\
\s_{m+d-2} \\
\vdots \\
\s_m \\
\end{bmatrix}
\end{equation}
for all $m\ge 0$.
Now suppose that $s$ belongs to $\Per(m)$.
We want to show that $m\ge \ell$.
By definition of $m$ being a period length, Equation~\eqref{eq: powers of M with A} becomes
\[
A M^m
=
\begin{bmatrix}
\s_{d-1} \\
\s_{d-2} \\
\vdots \\
\s_0
\end{bmatrix}
=
A.
\]
Since $A$ is invertible, we conclude that $M^m =I$, so $m$ is a multiple of $\ord(M)=\ell$.
In particular, $m\ge \ell$, as desired.
\qed
\end{proof}

The following result characterizes matrices of finite order.

\begin{theorem}[{Koo~\cite[page~147]{Koo}}]\label{thm: cyclotomic block-diagonal}
	A matrix $M \in \GL_d(\Q)$ has finite order if and only if there exist positive integers $r, m_1, \dots, m_r, n_1,\ldots,n_r$ (where the $m_i$'s are pairwise distinct), an invertible matrix $P \in \GL_d(\Q)$, and a block-diagonal matrix
	\[
		A = \diag( \underbrace{C(\Phi_{m_1}), \ldots, C(\Phi_{m_1})}_{n_1 \text{ times}}, \underbrace{C(\Phi_{m_2}), \ldots, C(\Phi_{m_2})}_{n_2 \text{ times}}, \ldots, \underbrace{C(\Phi_{m_r}), \ldots, C(\Phi_{m_r})}_{n_r \text{ times}})\text,
	\]
	with $\sum_{i=1}^r n_i\deg(\Phi_{m_i}) = d$ such that $M=P^{-1}AP$.
	Moreover, the order of $M$ is $\lcm(m_1,m_2,\ldots,m_r)$.
\end{theorem}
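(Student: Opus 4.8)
The plan is to regard $M$ as equipping $\Q^d$ with the structure of a $\Q[x]$-module via $x \cdot v = Mv$, and to route both implications through the rational canonical form together with the number-theoretic fact that the irreducible factors of $x^k-1$ over $\Q$ are exactly the cyclotomic polynomials $\Phi_m$ for $m \in \Divs(k)$. A computation I would isolate once and reuse throughout is that $\ord C(\Phi_m) = m$ for every $m \ge 1$: the companion matrix $C(\Phi_m)$ has $\Phi_m$ as both its characteristic and minimal polynomial, and $\Phi_m$ is separable, so $C(\Phi_m)$ is diagonalizable over $\C$ with the primitive $m$th roots of unity as its eigenvalues. Since the order of a diagonalizable matrix is the least common multiple of the multiplicative orders of its eigenvalues, and each primitive $m$th root of unity has order exactly $m$, we get $\ord C(\Phi_m) = m$.

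The backward direction is then immediate. If $M = P^{-1}AP$ with $A$ the displayed block-diagonal matrix, then $\ord M = \ord A$ because conjugation preserves order, and $\ord A$ equals the least common multiple of the orders of its diagonal blocks, because powers of a block-diagonal matrix act blockwise. By the computation above this is $\lcm(m_1, \dots, m_r)$, which is finite; hence $M$ has finite order and the order formula holds.

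For the forward direction, suppose $M^k = I$ for some $k \ge 1$, so the minimal polynomial $\mu$ of $M$ divides $x^k - 1$. Using the factorization $x^k - 1 = \prod_{m \in \Divs(k)} \Phi_m(x)$ and the irreducibility of each $\Phi_m$ over $\Q$, the polynomial $x^k - 1$ is a product of distinct irreducibles, hence squarefree; therefore $\mu = \prod_{i=1}^{r} \Phi_{m_i}$ for some pairwise distinct $m_1, \dots, m_r$. The decisive point is that $\mu$ is squarefree, so in the elementary-divisor decomposition of $\Q^d$ as a $\Q[x]$-module every elementary divisor is one of the $\Phi_{m_i}$ to the first power. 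Consequently $\Q^d$ splits as a direct sum of cyclic modules $\Q[x]/(\Phi_{m_i})$, and choosing the standard cyclic basis on each summand realizes $M$, after a change of basis $P \in \GL_d(\Q)$, as a block-diagonal matrix whose blocks are the companion matrices $C(\Phi_{m_i})$. Collecting the $n_i$ summands isomorphic to $\Q[x]/(\Phi_{m_i})$ yields exactly the grouped form $A$ in the statement, with $\sum_{i=1}^{r} n_i \deg(\Phi_{m_i}) = \sum_{i=1}^{r} n_i \phi(m_i) = d$ by \cref{lem: degree cyclotomic}. The order formula then follows by applying the backward-direction computation to $A$ and using $\ord M = \ord A$.

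The main obstacle is this structural step in the forward direction: passing from ``the minimal polynomial is squarefree'' to ``the rational canonical form consists solely of first-power companion blocks $C(\Phi_{m_i})$''. This relies on the structure theorem for finitely generated modules over the principal ideal domain $\Q[x]$ together with the irreducibility of the $\Phi_m$ over $\Q$; together these exclude any block $C(\Phi_{m_i}^{\,e})$ with $e \ge 2$, which is precisely what keeps each eigenvalue appearing with a single Jordan factor and makes the order collapse to the clean value $\lcm(m_1, \dots, m_r)$.
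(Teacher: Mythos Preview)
The paper does not prove this theorem; it is quoted as a known result from Koo~\cite[page~147]{Koo} and used as a black box in the proofs of \cref{thm: allowed dimensions,thm: periodic seq matrices order} and \cref{lem: finite order}. So there is no in-paper argument to compare against.

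That said, your proof is correct and is the standard route. The backward direction is clean: conjugation preserves order, block-diagonal matrices have order equal to the $\lcm$ of the block orders, and your computation $\ord C(\Phi_m)=m$ via diagonalizability over~$\C$ is valid. For the forward direction, the key observation---that $x^k-1$ is squarefree over~$\Q$ because its irreducible factorization $\prod_{m\mid k}\Phi_m$ has no repeated factors---is exactly what forces every elementary divisor of the $\Q[x]$-module $\Q^d$ to be some $\Phi_{m_i}$ to the first power, so the rational canonical form consists only of blocks $C(\Phi_{m_i})$. One small point worth making explicit: since the minimal polynomial $\mu=\prod_{i=1}^r\Phi_{m_i}$ is the $\lcm$ of the elementary divisors, each $\Phi_{m_i}$ genuinely occurs as an elementary divisor, so each multiplicity $n_i$ is indeed a positive integer as the statement requires.
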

Observe that $r \le  \sum_{i=1}^r n_i\deg(\Phi_{m_i}) = d$ since $n_i, \deg(\Phi_{m_i})\ge 1$ for all $i$.

We state two lemmas useful to prove \cref{thm: periodic seq matrices order},
describing the characteristic polynomial of the minimal recurrence satisfied by a constant-recursive and periodic sequence.

\begin{restatable}{lemma}{FactorCR}\label{lem: factor}
	Let $\ell \ge 2$.
	For every constant-recursive and periodic sequence~$s$ whose period length divides~$\ell$,
	the characteristic polynomial of the minimal recurrence satisfied by~$s$ divides $x^\ell - 1$.
\end{restatable}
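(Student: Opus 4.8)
The plan is to exploit the fact that periodicity provides one explicit annihilating recurrence, namely the one with characteristic polynomial $x^\ell - 1$, and then argue that the characteristic polynomial $\chi(x)$ of the minimal recurrence must divide it. Let $E$ denote the shift operator on sequences, $(Es)(n) = s(n+1)$, so that ``$s$ satisfies the recurrence with characteristic polynomial $p$'' means exactly $p(E)s = 0$. Since the period length of $s$ divides $\ell$, we have $s(n+\ell) = s(n)$ for all $n$, i.e. $(x^\ell-1)(E)\,s = 0$. On the other hand, writing $d = \rank(s)$, the minimal recurrence gives $\chi(E)s = 0$ with $\chi$ monic of degree $d$.

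First I would perform Euclidean division in $\Q[x]$, writing $x^\ell - 1 = q(x)\chi(x) + r(x)$ with $\deg r < d$. Applying this operator identity to $s$ and using $\chi(E)s = 0 = (x^\ell-1)(E)s$ yields $r(E)s = 0$. The goal is then to force $r = 0$, which gives $\chi \mid x^\ell - 1$; since $\chi$ and $x^\ell-1$ are monic with integer coefficients, the divisibility even holds over $\Z[x]$ by Gauss's lemma.

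The main obstacle is that a relation $r(E)s = 0$ of degree below $d$ does not immediately contradict minimality, because the paper's notion of recurrence (and hence of rank) requires a nonzero constant term $c_0$, whereas $r$ may be divisible by $x$. To handle this I would factor $r(x) = x^a\tilde r(x)$ with $\tilde r(0)\neq 0$. Then $r(E)s = E^a\big(\tilde r(E)s\big) = 0$ only says that $\tilde r(E)s$ vanishes from index $a$ onward. The key observation is that $\tilde r(E)s$ is a $\Q$-linear combination of shifts of $s$, hence itself periodic with period length dividing $\ell$, and a periodic sequence that is eventually $0$ is identically $0$; so $\tilde r(E)s = 0$. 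After normalizing $\tilde r$ to be monic, this is a genuine recurrence in the sense of the paper, of order $\deg\tilde r < d$ and with nonzero constant term, contradicting $d = \rank(s)$ — unless $\tilde r$ is a nonzero constant, in which case $\tilde r(E)s = 0$ forces $s = 0$, which is excluded since $s$ has rank $d \ge 1$. Hence $r = 0$, completing the proof.

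Equivalently, and more structurally, one can observe that the annihilator set $I = \{p \in \Q[x] : p(E)s = 0\}$ is an ideal of the principal ideal domain $\Q[x]$, generated by a monic polynomial $\mu$; periodicity gives $x^\ell-1 \in I$, so $\mu \mid x^\ell - 1$ and in particular $\mu(0)\neq 0$, which shows $\mu$ has nonzero constant term and therefore $\deg\mu = \rank(s) = \deg\chi$, whence $\mu = \chi$ and $\chi \mid x^\ell - 1$. In either presentation the only delicate point is exactly this reconciliation of the minimal annihilator with the ``minimal recurrence'' of the paper under the constraint $c_0 \neq 0$, and periodicity is precisely what resolves it for free.
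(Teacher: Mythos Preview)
Your proof is correct and follows the same approach as the paper: observe that $s(n+\ell)=s(n)$ is a valid recurrence with characteristic polynomial $x^\ell-1$, then invoke minimality to deduce the divisibility. The paper compresses this into the phrase ``the conclusion follows by minimality,'' whereas you actually justify that step---in particular your careful handling of the constraint $c_0\neq 0$ (via the factorisation $r=x^a\tilde r$ and the periodicity argument, or equivalently via the annihilator ideal) fills in exactly what the paper leaves implicit.
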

\begin{proof}
	\label{proof: lem: factor}
By assumption,  we have $s(n+\ell)=s(n)$ for all $n\ge 0$.
The characteristic polynomial of the latter recurrence is $x^\ell-1$.
The conclusion follows by minimality.
\qed
\end{proof}

\begin{restatable}{lemma}{FiniteOrder}
	\label{lem: finite order}
	Let $\ell \ge 2$ and let $f(x)$ be a factor of $x^\ell - 1$.
	The companion matrix $M$ of $f(x)$ has finite order.
	Moreover, $f(x) = \prod_{i = 1}^r \Phi_{m_i}(x)$ is a product of distinct cyclotomic polynomials where each index $m_i$ is a divisor of $\ell$ and $\ord M = \lcm(m_1, m_2, \dots, m_r)$.
\end{restatable}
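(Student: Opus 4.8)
The plan is to read both the finiteness and the exact value of $\ord M$ off the factorization of $f$ into cyclotomic polynomials, via the minimal polynomial of the companion matrix. First I would determine the shape of $f$. Since $M = C(f)$ is built from a monic polynomial, $f$ is the monic factor of $x^\ell - 1$ in question, and I would invoke the identity $x^\ell - 1 = \prod_{m \in \Divs(\ell)} \Phi_m(x)$ recalled earlier. The $\ell$-th roots of unity are pairwise distinct, so $x^\ell - 1$ is squarefree; as each $\Phi_m$ is irreducible over $\Q$, the polynomials $\{\Phi_m : m \in \Divs(\ell)\}$ are pairwise coprime distinct irreducibles. Consequently any monic divisor of $x^\ell - 1$ is exactly a product $f(x) = \prod_{i=1}^r \Phi_{m_i}(x)$ of finitely many of them with the $m_i$ distinct divisors of $\ell$. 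This already establishes the claimed form of $f$, including the \emph{distinctness} of the indices $m_i$ and the fact that each divides $\ell$.

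Next I would use the standard fact that the minimal polynomial of a companion matrix equals its characteristic polynomial, which for $M = C(f)$ is $f$ itself. Because $f \mid x^\ell - 1$, the polynomial $x^\ell - 1$ annihilates $M$, whence $M^\ell = I$; in particular $\ord M$ is finite (and divides $\ell$). This settles the first assertion.

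To identify the exact order, I would observe that for any $m \ge 1$ one has $M^m = I$ if and only if $x^m - 1$ is annihilated at $M$, i.e.\ if and only if the minimal polynomial $f$ divides $x^m - 1$. Applying the same cyclotomic factorization to $x^m - 1 = \prod_{d \in \Divs(m)} \Phi_d(x)$ and using that the $\Phi_d$ are coprime irreducibles, $f = \prod_{i=1}^r \Phi_{m_i}$ divides $x^m - 1$ exactly when every factor $\Phi_{m_i}$ does, i.e.\ when $m_i \mid m$ for all $i$ (the standard equivalence $\Phi_d \mid x^m - 1 \iff d \mid m$). The least such $m$ is $\lcm(m_1, \dots, m_r)$, giving $\ord M = \lcm(m_1, m_2, \dots, m_r)$, as required.

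The step requiring the most care is purely structural rather than computational: justifying that $x^\ell - 1$ is squarefree so that $f$ is genuinely a product of \emph{distinct} cyclotomic polynomials, and cleanly invoking the two standard inputs, namely that the minimal polynomial of a companion matrix is the defining polynomial $f$, and that $\Phi_d \mid x^m - 1$ precisely when $d \mid m$. Neither is deep; the real content of the lemma is the translation of the matrix identity $M^m = I$ into a divisibility condition on cyclotomic factors, which these facts make routine.
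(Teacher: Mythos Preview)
Your proof is correct and, in fact, more self-contained than the paper's. The paper argues that the roots of $f$ are roots of unity (hence so are the eigenvalues of $M$) and then invokes Koo's structure theorem (\cref{thm: cyclotomic block-diagonal}) as a black box to conclude both that $M$ has finite order and that this order equals $\lcm(m_1,\dots,m_r)$. You instead bypass that external theorem entirely: by using that the minimal polynomial of a companion matrix is the defining polynomial $f$, you translate $M^m = I$ directly into the divisibility $f \mid x^m - 1$, and then read off the order from the cyclotomic factorization via $\Phi_d \mid x^m - 1 \iff d \mid m$. Your route is more elementary and makes the lemma independent of \cref{thm: cyclotomic block-diagonal}; the paper's route is shorter on the page but leans on a stronger structural result than is really needed here.
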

\begin{proof}
	\label{proof: lem: finite order}
Since the roots of $f(x)$ are roots of unity, the eigenvalues of $M$ are roots of unity with order dividing $\ell$.
Therefore,  the matrix $M$ has finite order by \cref{thm: cyclotomic block-diagonal}, and its order is $\lcm(m_1, m_2, \dots, m_r)$.
\qed
\end{proof}

We are now ready to prove \cref{thm: periodic seq matrices order}.

\begin{proof}[of \cref{thm: periodic seq matrices order}]
	For the sake of conciseness,
	set~$T = \left\{\rank(s) : s \in \Per(\ell)\right\}$.
	We prove that $T=R$ by showing two inclusions.

	\textbf{First inclusion.}
		We show $T \subseteq R$.
		So let $s \in \Per(\ell)$ and let $d = \rank(s)$.
		We construct a simple matrix $M \in \GL_d(\Z)$ such that $\ord M = \ell$.
		Since $\rank(s)=d$, the sequence $s$ satisfies a recurrence of the form of Equation~\eqref{eq: constant-recursive recurrence} by definition.
		Let $M$ be the companion matrix of the characteristic polynomial $\chi(x)$ of this recurrence.
		By \cref{lem: factor}, $\chi(x)$ is a factor of $x^\ell - 1$, so the matrix $M$ is simple.
		By \cref{lem: finite order}, $M$ has finite order and $\ord M \leq \ell$.
		The other inequality, $\ell \le \ord M$, follows from \cref{pro: periods for a given recurrence}.
		Therefore, $T \subseteq R$.
		
	\textbf{Second inclusion.}
		We show $R \subseteq T$.
		Let $d \ge 1$ and let $M \in \GL_d(\Z)$ be a simple matrix such that $\ord M = \ell$.
		By minimality,  for all $m \in \Divs(\ell)$ with $m < \ell$, we have $M^m\neq I$.
		By \cref{thm: cyclotomic block-diagonal}, $\chi_M(x) = \prod_{i=1}^r \Phi_{m_i}(x)^{n_i}$ for some integers $r, m_i,n_i$, where each $m_i$ is a divisor of $\ell$, the $m_i$'s are distinct, and $\lcm(m_1, m_2, \dots, m_r) = \ell$.
		Since $M$ is simple,  $n_i = 1$ for all $i$.
		Write $\chi_M(x) = x^{d} - c_{d - 1} x^{d - 1} - \dots - c_1 x - c_0$ where $c_0, c_1, \dots, c_{d - 1} \in \Z$.
		Let $s$ be the sequence satisfying the recurrence $s(n + d) = c_{d - 1} s(n + d - 1) + \dots + c_1 s(n + 1) + c_0 s(n)$
		for all $n \ge 0$ with initial conditions $s(0)=s(1)=\cdots=s(d-2)=0$ and $s(d-1)=1$.
		By \cref{pro: periods for a given recurrence},  the period length of $s$ is $\ord(M)=\ell$,  so $s\in\Per(\ell)$.
		We claim that $\rank(s)=d$.
		By definition,  $s$ has rank at most $d$. 
		For all $d' < d$,  the only sequence beginning with $d'$ initial zeroes and satisfying a recurrence  relation of order $d'$ is the constant-zero sequence.
		Hence $\rank(s)=d$.
		Consequently,  $R \subseteq T$.
\qed
\end{proof}

We illustrate each direction of the proof with an example.

\begin{example}
	Let $s\in\Per(3)$ be the sequence with period $(-1, 0, 1)$.
	It satisfies $s(n + 2) = -s(n + 1) - s(n)$ for all $n \ge 0$.
	The characteristic polynomial of this recurrence is $x^2 + x + 1$,
	whose companion matrix is
	\[
		M = \begin{bmatrix}
				0 & -1 \\
				1 & -1
			\end{bmatrix}
	\]
	and which is a factor of $x^3 - 1$.
	It can be checked that $\ord M$ indeed equals $3$.
\end{example}

\begin{example}
	Let let $M\in \GL_6(\Z)$ be the matrix whose rows are given by the six vectors
	\[
		\begin{array}{ccc}
			\begin{bmatrix}
				1 & -1 & 1 & 0 & 1 & 1
			\end{bmatrix}\text,
			&
			\begin{bmatrix}
				1 & -1 & 1 & 0 & 1 & 0 
			\end{bmatrix}\text,
			&
			\begin{bmatrix}
				2 & -1 & 1 & 1 & 0 & 0 
			\end{bmatrix}\text, \\
			\begin{bmatrix}
				-2 & 1 & -2 & -1 & -1 & -1
			\end{bmatrix}\text,
			&
			\begin{bmatrix}
			-1 & 0 & 0 & 0 & -1 & 0
			\end{bmatrix}\text,
			&
			\begin{bmatrix}
				-1 & 1 & -1 & 0 & 0 & -1
			\end{bmatrix}\text.
		\end{array}
	\]
	One can check that $M$ has order $15$ and its characteristic polynomial is equal to $\chi_M(x) 
	= x^6+2 x^5+3 x^4+3 x^3+3 x^2+2 x+1
	= \Phi_3(x)\Phi_5(x)$.
	It is not difficult to find an invertible matrix $P \in \GL_6(\Q)$ such that $P M P^{-1} = \diag(C(\Phi_3),C(\Phi_5))$.
	We define $s$ to be the sequence satisfying $s(0)=s(1)=s(2)=s(3)=s(4)=0$, $s(5)=1$,  and,  for all $n \ge 0$,
	\[
		s(n + 6) = -2s(n + 5) -3s(n + 4) -3s(n + 3) -3s(n + 2) -2s(n + 1) -s(n)\text.
	\]
	It can be verified that $s$ has period $(0, 0, 0, 0, 0, 1, -2, 1, 1, -2, 2, -1, -1, 2, -1)$,  so $s\in\Per(15)$ and $\rank(s)=6$.
\end{example}

We now turn to the proof of \cref{thm: allowed dimensions}.
Note that the $\lcm$ condition in the statement implies that each $d_i$ is a divisor of $\ell$.


\begin{proof}[of \cref{thm: allowed dimensions}]
	Let $\ell \ge 2$.
	By \cref{thm: periodic seq matrices order},
	the sets $ \left\{\rank(s) : s \in \Per(\ell)\right\}$ and $R$ are equal.
	By \cref{thm: cyclotomic block-diagonal} and \cref{lem: degree cyclotomic},
	there exists a simple matrix $M$ of size $d$ with order~$\ell$ if and only if $d = \sum_{i = 1}^j \phi(d_i)$ for some set of integers $\{d_1, d_2, \dots, d_j\} \in S$.
	The conclusion of the first part follows and the second one follows by \cref{pro: upper bound tight const-rec}.
\qed
\end{proof}

Notice that \cref{thm: allowed dimensions} directly implies the following,  explaining why only two values are present on prime rows in \cref{tab:values-rank-cr}.

\begin{corollary}
	For every prime~$\ell$,  the only muggle numbers with respect to $\mathrm{P}_{\text{cr}}(\ell)$ are~$\ell-1$ and~$\ell$.
\end{corollary}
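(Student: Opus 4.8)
The plan is to derive this corollary as a direct specialization of \cref{thm: allowed dimensions} to the case where $\ell$ is prime. By the theorem, the set of muggle numbers is $R' = \{\sum_{i=1}^j \phi(d_i) : \{d_1, \dots, d_j\} \in S\}$, where $S$ consists of the non-empty sets of pairwise distinct non-negative integers whose least common multiple equals $\ell$. So the entire task reduces to enumerating all such admissible sets $S$ when $\ell$ is prime, computing the associated sum of totients, and checking that exactly the two values $\ell - 1$ and $\ell$ arise.

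First I would determine the structure of $S$ for prime $\ell$. Since each $d_i$ must divide $\ell$ (as noted in the remark preceding the proof of \cref{thm: allowed dimensions}), and the divisors of a prime $\ell$ are exactly $1$ and $\ell$, every element of every set in $S$ lies in $\{1, \ell\}$. The condition $\lcm(d_1, \dots, d_j) = \ell$ forces $\ell$ itself to appear in the set (otherwise the lcm would be $1 \neq \ell$). Hence the only candidate sets are $\{\ell\}$ and $\{1, \ell\}$. (One should be slightly careful about whether $0$ is permitted as a $d_i$, but $\lcm$ with $0$ does not yield $\ell$, so $0$ cannot occur in any member of $S$; this is a one-line remark.)

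Next I would compute the corresponding sums. For the set $\{\ell\}$, the value is $\phi(\ell) = \ell - 1$, using that $\ell$ is prime. For the set $\{1, \ell\}$, the value is $\phi(1) + \phi(\ell) = 1 + (\ell - 1) = \ell$. These are the only two elements of $R'$, so $R' = \{\ell - 1, \ell\}$, establishing the claim.

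There is essentially no obstacle here: the argument is a short finite case analysis driven entirely by the fact that a prime has only two divisors, combined with the elementary values $\phi(1) = 1$ and $\phi(\ell) = \ell - 1$. The only point requiring a moment's attention is the bookkeeping on which divisors may appear and the exclusion of $0$, but these follow immediately from the $\lcm$ constraint. I would present the proof in two or three sentences, invoking \cref{thm: allowed dimensions} and then reading off the two admissible sets and their totient sums.
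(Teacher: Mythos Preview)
Your proposal is correct and matches the paper's approach exactly: the paper simply states that the corollary follows directly from \cref{thm: allowed dimensions}, and your case analysis on the two divisors of a prime, together with the values $\phi(1)=1$ and $\phi(\ell)=\ell-1$, is precisely the intended specialization.
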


\section{Regular Sequences}\label{sec: regular sequences}

In this section, we look at $k$-regular sequences~\cite{Allouche--Shallit-1992}, which are base-$k$ analogues of constant-recursive sequences.
An integer sequence $s$ is \emph{$k$-regular} if the $\Q$-vector space $V_k(s)$ generated by its $k$-kernel (defined in \cref{section: automatic sequences}) is finitely generated.
The \emph{$k$-rank} (or simply the \emph{rank} when the context is clear) of $s$ is the dimension of $V_k(s)$.
We use, again, the notation $\rank_k(s)$.

Every periodic sequence is $k$-automatic for every $k$ and thus also $k$-regular.
However, the corresponding $k$-ranks of are not necessarily equal.

\begin{example}\label{ex: aut VS reg}
	Consider the sequence $s\in\Per(4)$ with period $(0,1,1,1)$.
	The $2$-kernel of~$s$ consists of the sequences with periods~$(0,1,1,1)$,  $(0,1)$, $(1)$, and~$(0)$.
	Viewed as a $2$-automatic sequence,
	$s$ has rank $4$,
	while, when seen as a $2$-regular sequence,
	it has rank $3$ (the sequences with periods~$(0,1,1,1)$,  $(0,1)$,  and $(1)$ are linearly independent).
\end{example}

For integers $k,\ell\ge 2$ and the family of periodic sequences $\Per(\ell)$,
we consider the following property:
an integer satisfies $\mathrm{P}_{\text{r}}(k,\ell)$ if and only if it is equal to the $k$-rank of some sequence in $\Per(\ell)$.
%
%
%
%
To look for magic and muggle numbers with respect to $\mathrm{P}_{\text{r}}(k,\ell)$,  we will use the following notion.
%
Let $n\ge 1$ be an integer and let $(a_0, a_1,\ldots, a_{n-1})$ be a sequence of integers.
The \emph{circulant matrix} $C(a_0, a_1,\ldots, a_{n-1})$ is the matrix of size $n$ for which the $i$th row,  $i\in\{1,\ldots,n\}$,  is the $(i-1)$st shift of the first row.
The polynomial $f(x) = \sum_{i=0}^{n-1} a_i x^i$ is the \emph{associated polynomial} of~$C(a_0, a_1,\ldots, a_{n-1})$.

We link the rank of the circulant matrix associated with a given periodic sequence to its constant-recursive and $k$-regular ranks.
The proof of the first result follows from~\cite[Theorem~2.1.6]{Berstel-Reutenauer} and the discussion in~\cite[Chapter~6,  Section~1]{Berstel-Reutenauer}.

\begin{proposition}[{Berstel \& Reutenauer~\cite{Berstel-Reutenauer}}]
	\label{pro: rankCR equal rank circulant}
	Let $\ell\ge 2$. 
	Let $s \in \Per(\ell)$ and let $M=C(s(0),s(1),\ldots,s(\ell-1))$ be the circulant matrix of size $\ell$ built on the period of $s$.
	The rank of $s$ as a constant-recursive sequence is equal to $\rank M$.
\end{proposition}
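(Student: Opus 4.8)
The plan is to connect the rank of $s$ as a constant-recursive sequence—i.e., the dimension of the $\Q$-vector space spanned by the $1$-kernel $\{s(n+j)_{n\ge 0} : j \ge 0\}$—directly to the linear-algebraic rank of the circulant matrix $M = C(s(0), \dots, s(\ell-1))$. First I would observe that, because $s \in \Per(\ell)$, the shift $s(n+j)_{n\ge 0}$ depends only on $j \bmod \ell$, so the $1$-kernel of $s$ contains at most $\ell$ distinct shifts, namely $s(n)_{n\ge 0}, s(n+1)_{n\ge 0}, \dots, s(n+\ell-1)_{n\ge 0}$. Each of these shifts is again a periodic sequence with period length dividing $\ell$, hence is determined by its first $\ell$ values. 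The key identification is that the $j$th shift, restricted to its first $\ell$ entries, is exactly $(s(j), s(j+1), \dots, s(j+\ell-1))$ with indices read modulo $\ell$—which is precisely the $(j+1)$st row of the circulant matrix $M$ (up to the indexing convention of the definition of $C$).

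The crux of the argument is then the following equivalence of dimensions. The rank of $s$ as a constant-recursive sequence is the dimension of the span of the shifts $s(n+j)_{n \ge 0}$ as elements of the infinite-dimensional space of sequences. Since all these shifts are $\ell$-periodic, the natural map sending a periodic sequence to the vector of its first $\ell$ values is a $\Q$-linear isomorphism from the span of the shifts onto its image in $\Q^\ell$. Under this isomorphism, the span of the $1$-kernel corresponds exactly to the $\Q$-row space of $M$. Therefore the constant-recursive rank of $s$ equals the row rank of $M$, which is $\rank M$. I would carry out the steps in this order: (1) reduce the $1$-kernel to the $\ell$ cyclic shifts; (2) set up the evaluation isomorphism between $\ell$-periodic sequences and $\Q^\ell$; (3) identify the image of the shifts with the rows of $M$; (4) conclude that the two ranks coincide.

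The main obstacle I anticipate is bookkeeping rather than conceptual: one must verify carefully that the correspondence between shifts of $s$ and rows of the circulant matrix respects the precise indexing in the definition of $C(a_0, \dots, a_{n-1})$ (where the $i$th row is the $(i-1)$st shift of the first row), and that passing from infinite periodic sequences to their length-$\ell$ truncations neither collapses nor creates linear dependencies. The first point is handled by matching the definition of the circulant directly to the shifted-sequence vectors; the second is guaranteed because an $\ell$-periodic sequence is uniquely recovered from its first $\ell$ terms, so the truncation map is injective on the space of $\ell$-periodic sequences and hence preserves dimension on any subspace of such sequences.

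Since the paper attributes this to Berstel and Reutenauer, an alternative would be to cite \cite[Theorem~2.1.6]{Berstel-Reutenauer} together with the discussion in \cite[Chapter~6, Section~1]{Berstel-Reutenauer}, where the linear representation of a constant-recursive (rational) sequence is related to the Hankel/circulant structure of its coefficients; in that framework the rank of the sequence equals the rank of the associated matrix, and for a purely periodic sequence this matrix is precisely the circulant $M$.
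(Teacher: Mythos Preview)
Your argument is correct and essentially self-contained. The paper itself does not give a proof of this proposition: it merely records that the result follows from \cite[Theorem~2.1.6]{Berstel-Reutenauer} together with the discussion in \cite[Chapter~6, Section~1]{Berstel-Reutenauer}. Your final paragraph already anticipates this and points to exactly the same references, so in that sense you have recovered the paper's ``proof''.

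The difference is that you also supply a direct elementary argument---identifying the $\ell$ cyclic shifts of $s$ with the rows of $M$ via the truncation isomorphism from $\ell$-periodic sequences to $\Q^\ell$---which the paper does not spell out. This buys self-containment: a reader need not chase the Berstel--Reutenauer citation to see why the two ranks agree. The only subtlety worth flagging (though it does not affect correctness here) is that the paper's definition of constant-recursive rank requires the recurrence to have $c_0 \neq 0$, whereas your argument computes the dimension of the span of the shifts; these coincide for periodic sequences because the minimal polynomial divides $x^\ell - 1$ and hence has nonzero constant term, and in any case the paper itself asserts this equivalence in the paragraph defining $\rank(s)$.
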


\begin{restatable}{proposition}{EqualityRanksRegularCirculant}
	\label{pro: equality ranks reg and circulant}
	Let $k,\ell\ge 2$ be coprime integers.
	Let $s \in \Per(\ell)$ and let $M=C(s(0),s(1),\ldots,s(\ell-1))$ be the circulant matrix of size $\ell$ built on the period of~$s$.
	Then $\rank_k(s)=\rank M$. 
\end{restatable}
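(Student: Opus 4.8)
The plan is to show that, for coprime $k$ and $\ell$, the $k$-regular rank of $s \in \Per(\ell)$ coincides with the rank of the circulant matrix built on the period of $s$. The natural strategy is to route through \cref{pro: rankCR equal rank circulant}: since that proposition already identifies $\rank M$ with the \emph{constant-recursive} rank of $s$, it suffices to prove that the $k$-regular rank of $s$ equals its constant-recursive rank when $\gcd(k,\ell)=1$. In other words, the key claim is
\[
	\dim_\Q V_k(s) = \dim_\Q \langle \{s(n+j)\n : j \ge 0\}\rangle_\Q,
\]
where the right-hand side is the $\Q$-span of the $1$-kernel (all shifts) of $s$. This is exactly the sense in which, as the introduction promised, ``the regular case boils down to the constant-recursive one.''

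First I would exploit the coprimality to get a clean description of the $k$-kernel. By \cref{lem: kernel rewriting}, when $k$ and $\ell$ are coprime we have
\[
	\ker_k(s) = \{s(k^e n + j)\n : e \ge 0,\ 0 \le j \le \ell-1\},
\]
so $V_k(s)$ is the $\Q$-span of these sequences $s(k^e n + j)\n$. The main structural observation is that each generator $s(k^e n + j)\n$ is itself a shift of the periodic sequence $s(k^e n)\n$, and moreover $s(k^e n)\n$ is again a periodic sequence in $\Per(\ell)$ (this uses \cref{lem: subsequence coprime}, iterated, together with the fact that $k^e$ is invertible modulo $\ell$). So each kernel element is, up to a permutation of the period entries induced by multiplication by $k^e$ modulo $\ell$, one of the $\ell$ cyclic shifts of $s$ itself. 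Concretely, since $k^e$ is invertible in $\Z_\ell$, the sequence $s(k^e n)\n$ has period $(s(0), s(k^e \bmod \ell), s(2k^e \bmod \ell), \ldots)$, which is the period of $s$ reindexed by the permutation $m \mapsto k^e m \bmod \ell$.

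Next I would assemble the comparison of spans. Every shift $s(n+j)\n$ lies in the $k$-kernel (take $e=0$), so the $\Q$-span of the $1$-kernel is contained in $V_k(s)$; this gives the inequality $\rank_k(s) \ge \rank M$ via \cref{pro: rankCR equal rank circulant}. For the reverse inclusion, I would show that each generator $s(k^e n + j)\n$ of $V_k(s)$ already lies in the span of the shifts of $s$. The cleanest way is to encode shifts of a period-$\ell$ sequence as vectors in $\Q^\ell$ (the period vectors) and observe that the span of all shifts of $s$ is exactly the row space of the circulant matrix $M$; multiplication-by-$k^e$ permutes coordinates, and the crucial point is that this coordinate permutation \emph{preserves} the cyclic structure because $k^e$ is a unit modulo $\ell$ and hence maps the cyclic group $\Z_\ell$ to itself as a group automorphism. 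Since $M$ is circulant, its row space is invariant under the cyclic shift, and one checks that it is also invariant under the coordinate permutation $m \mapsto k^e m \bmod \ell$ (equivalently, that the permuted period vector of $s(k^e n)\n$ remains a $\Q$-combination of the rows of $M$). Therefore every $s(k^e n + j)\n$ has its period vector inside the row space of $M$, giving $V_k(s) \subseteq \langle \text{shifts of } s\rangle_\Q$ and hence $\rank_k(s) \le \rank M$. Combining both inclusions yields $\rank_k(s) = \rank M$.

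The main obstacle I anticipate is the reverse inclusion, specifically verifying that the row space of the circulant matrix $M$ is stable under the coordinate permutation $\sigma_e : m \mapsto k^e m \bmod \ell$ induced by passing from $s$ to $s(k^e n)\n$. This is not automatic: an arbitrary permutation of the columns of a circulant matrix need not preserve its row space. What saves us is that $\sigma_e$ is a group automorphism of $\Z_\ell$ (since $\gcd(k,\ell)=1$), and such automorphisms intertwine with the cyclic shift in a controlled way — the permuted matrix is again circulant, and its row space, being generated by a shift-orbit, coincides with that of $M$ whenever $\sigma_e$ maps the generating orbit back into the same span. I would make this precise by working with the eigenbasis of circulant matrices (the discrete Fourier / roots-of-unity basis): a circulant matrix is diagonalized by the characters of $\Z_\ell$, its row space is determined by which characters are ``switched on'' (the nonzero Fourier coefficients of the period), and the automorphism $\sigma_e$ permutes the characters among themselves, in fact permuting $\chi$ and $\chi \circ \sigma_e$. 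The key verification is then that the set of active characters is closed under this permutation, which follows because $s(k^e n)\n$ and $s$ are related by a unit shift and therefore share the same constant-recursive rank, hence the same number of active characters, and the orbit structure forces closure. This Fourier-analytic reformulation is where the real content lies and is the step I would treat most carefully.
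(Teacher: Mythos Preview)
Your overall strategy is sound and is a Fourier-analytic recasting of the paper's argument: reduce to the inclusion $V_k(s)\subseteq\langle\text{shifts of }s\rangle_\Q$, identify the span of shifts with the row space of the circulant $M$, and show this row space is stable under the coordinate permutation $\sigma_e:m\mapsto k^e m\bmod\ell$. Your character setup is correct: over $\C$ the row space of $M$ is $\operatorname{span}\{\chi_j:\hat s(j)\neq 0\}$, and the period vector of $s(k^e n)\n$ has active set $k^eA$ where $A=\{j:\hat s(j)\neq 0\}$. So everything comes down to $k^eA=A$.

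The gap is precisely there. Your justification reads: ``$s(k^e n)\n$ and $s$ \ldots share the same constant-recursive rank, hence the same number of active characters, and the orbit structure forces closure.'' But $\lvert k^eA\rvert=\lvert A\rvert$ is automatic (multiplication by a unit is a bijection on $\Z/\ell\Z$) and says nothing about $k^eA=A$; nothing you wrote forces $A$ to be a union of $\langle k\rangle$-orbits. For instance with $\ell=7$, $k=3$, the set $\{1,2\}$ and its image $3\cdot\{1,2\}=\{3,6\}$ have equal size yet are disjoint --- what rules this out for an actual $s$ is not cardinality but rationality.

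What is needed is that $A$ is stable under multiplication by \emph{every} unit modulo $\ell$, and this comes from the fact that $s$ is $\Q$-valued: the minimal-recurrence polynomial $f(x)=\prod_{j\in A}(x-\omega^j)$ lies in $\Q[x]$, so its root set is stable under $\mathrm{Gal}(\Q(\omega)/\Q)\cong(\Z/\ell\Z)^\times$, which acts by $\omega^j\mapsto\omega^{aj}$. Equivalently --- and this is exactly how the paper argues --- by \cref{lem: factor} one has $f=\prod_{m\in S}\Phi_m$ for some $S\subseteq\Divs(\ell)$; since $\gcd(k,\ell)=1$ gives $\gcd(k^e,m)=1$ for each $m\in S$, the map $x\mapsto x^{k^e}$ permutes the primitive $m$th roots of unity and hence the roots of $f$, so the kernel element $s(k^e n+j)\n$ satisfies the same minimal recurrence as $s$ and lies in the span of its shifts. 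Once you insert this argument in place of ``orbit structure forces closure,'' your Fourier proof and the paper's cyclotomic-polynomial proof are two phrasings of the same idea.
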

\begin{proof}
By iteratively applying \cref{lem: subsequence coprime},  every sequence $s(k^e n + j)$ with $e\ge 0$ and $0\le j \le k^e - 1$ has period length $\ell$,  so we have $\rank_k(s),\rank M \le \ell$.
As a first easy case,  if the sequences $s(n)\n, s(n+1)\n,\ldots,s(n+\ell-1)\n$ are linearly independent,  then $\rank M = \ell = \rank_k(s)$ by definition,  which suffices.
As a second case,  assume that there is a smallest set of linearly independent sequences among them,  say of size $d < \ell$.
By definition,  we have $\rank M = d$.
Using \cref{pro: rankCR equal rank circulant},  every sequence in
\[
V = \{ s'\in\Per(\ell) \colon s' \text{ satisfies the same linear recurrence as } s \}
\]
can be written as a linear combination of the $d$ linearly independent sequences.
We now prove that every sequence in $\ker_k(s)$ belongs to $V$.
By \cref{lem: rewriting},  such a sequence can be written as $s(cn+r)\n$ with $0\le c,r\le \ell-1$ (and $c$ is invertible in $\Z_\ell$).
Since the sequences~$s$ and~$s(cn+r)\n$ both belong to~$\Per(\ell)$,
let~$f$ and~$g$ be the respective characteristic polynomials of their minimal recurrences.
By \cref{lem: factor} and the factorization of $x^{\ell}-1$ into cyclotomic polynomials,  there exists a set $S\subseteq \Divs(\ell)$ such that $f(x) =\prod_{m \in S} \Phi_m(x)$.
The roots of $g(x)$ are the~$c$th powers of those of $f(x)$.
Since $k$ and $\ell$ are coprime,  so are $c$ and $m$,  thus the map $x \mapsto x^c$ is a permutation of the roots of $\Phi_m(x)$.
Therefore,  the roots of~$g$ are equal to those of $f$.
So $s(cn+r)\n$ belongs to $V$.
Therefore,  $\rank_k(s) = d$,  as desired.
\qed
\end{proof}

Note that the previous argument does not apply to the case where $k$ and $\ell$ are not coprime.

\begin{example}
In \cref{ex: aut VS reg},  the kernel sequence $s$ with period $(1,0)$ is not a linear combination of the other kernel sequences having respective periods $(0,1,1,1)$, $(1)$,  and $(0)$.
However,
$s$ is a linear combination of the shifts of sequence $s_0$ with period $(0,1,1,1)$ since,
if we let $s_1$,  $s_2$, and $s_3$ respectively be the first,  second, and third shifts of $s_0$,
we have $s=\frac{2}{3} s_0 - \frac{1}{3} s_1 + \frac{2}{3} s_2 - \frac{1}{3}s_3$. 
\end{example}

%
%
%

\begin{theorem}\label{thm: characterization regular sequences coprime case}
	Let $k,\ell\ge 2$ be coprime integers.
The range for $\mathrm{P}_{\text{r}}(k,\ell)$ is~$[\psi(\ell),\ell]$.
Now let $S$ be the set of non-empty sets~$\{d_1, d_2, \dots, d_j\}$ of non-negative pairwise distinct integers with $\lcm(d_1, d_2, \dots, d_j) = \ell$. 
	The set $R' =\{\sum_{i = 1}^j \phi(d_i) : \{d_1, d_2, \dots, d_j\} \in S\}$
	is the set of muggle numbers in the range for $\mathrm{P}_{\text{r}}(k,\ell)$.
	In particular,  the set of magic numbers with respect to $\mathrm{P}_{\text{r}}(k,\ell)$ is $\N\setminus R'$.
\end{theorem}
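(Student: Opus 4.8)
The plan is to show that, when $k$ and $\ell$ are coprime, the $k$-regular rank of a periodic sequence coincides with its constant-recursive rank, so that the entire statement collapses onto the already-established characterization in the constant-recursive case. In other words, I would prove an equality of \emph{sets} of achievable ranks and then quote \cref{pro: upper bound tight const-rec} and \cref{thm: allowed dimensions} verbatim.

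The key step is to chain together the two circulant results. Fix $s \in \Per(\ell)$ and let $M = C(s(0), s(1), \ldots, s(\ell-1))$ be the circulant matrix built on its period. On the one hand, \cref{pro: equality ranks reg and circulant} gives $\rank_k(s) = \rank M$ for the $k$-regular rank, using that $k$ and $\ell$ are coprime. On the other hand, \cref{pro: rankCR equal rank circulant} gives that the rank of $s$ as a constant-recursive sequence also equals $\rank M$. Combining the two identities, the $k$-regular rank and the constant-recursive rank of $s$ are equal. Since this holds for every $s \in \Per(\ell)$, the set $\{\rank_k(s) : s \in \Per(\ell)\}$ of achievable $k$-regular ranks coincides with the set $\{\rank(s) : s \in \Per(\ell)\}$ of achievable constant-recursive ranks.

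With this equality of sets in hand, the theorem follows immediately. The range for $\mathrm{P}_{\text{r}}(k,\ell)$ equals the range for $\mathrm{P}_{\text{cr}}(\ell)$, which is $[\psi(\ell),\ell]$ by \cref{pro: upper bound tight const-rec}; and the muggle set for $\mathrm{P}_{\text{r}}(k,\ell)$ equals the muggle set for $\mathrm{P}_{\text{cr}}(\ell)$, which is $R'$ by \cref{thm: allowed dimensions}. Consequently the magic numbers are $\N \setminus R'$, as claimed.

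Since the substantive work is entirely contained in \cref{pro: equality ranks reg and circulant} — whose proof already handles the delicate point that the map $x \mapsto x^c$ permutes the roots of a cyclotomic factor $\Phi_m$ when $\gcd(c,m)=1$, forcing every kernel sequence to satisfy the same minimal recurrence as $s$ — the only obstacle that remains at this stage is to be sure coprimality is genuinely invoked: it is exactly what licenses \cref{pro: equality ranks reg and circulant}, and the remark and example following that proposition show that the reduction indeed breaks down when $k$ and $\ell$ share a factor (recall \cref{ex: aut VS reg}, where the $k$-regular and constant-recursive ranks differ). No further calculation is required.
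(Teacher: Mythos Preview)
Your proof is correct and follows essentially the same approach as the paper: chain \cref{pro: rankCR equal rank circulant} and \cref{pro: equality ranks reg and circulant} through the circulant matrix to identify the sets of achievable $k$-regular and constant-recursive ranks, then invoke \cref{pro: upper bound tight const-rec} and \cref{thm: allowed dimensions}. The additional commentary on why coprimality is essential is accurate and helpful, though not strictly needed for the argument.
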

\begin{proof}
	In this proof,
	for every sequence $s \in \Per(\ell)$,
	we let $M_s$ denote the circulant matrix $C(s(0),s(1),\ldots,s(\ell-1))$ of size $\ell$ built on the period of $s$.
	As in \cref{sec: constant-recursive sequences},  we also let $\rank(s)$ denote the rank of $s$ viewed as constant-recursive sequence.
	By \cref{pro: rankCR equal rank circulant,pro: equality ranks reg and circulant},  we have that 
	\[
		\left\{\rank_k(s) : s \in \Per(\ell)\right\}
		=
		\left\{\rank M_s : s \in \Per(\ell)\right\}
		=
		\left\{\rank(s): s \in \Per(\ell)\right\}\text.
	\]
	In particular,  the sets of muggle numbers for $\mathrm{P}_{\text{r}}(k,\ell)$ and $\mathrm{P}_{\text{cr}}(\ell)$ are equal.
	\cref{pro: upper bound tight const-rec,thm: allowed dimensions} then allow to conclude the proof.
\qed
\end{proof}

\section{Conclusion and Future Work}

In this paper,  we studied the magic number problem for three definitions of the rank of periodic sequences.
In particular,
in the case of constant-recursive sequences,
we obtain a full characterization of magic numbers for every period length,
while,
for~$k$-automatic and $k$-regular sequences with period length~$\ell$,
we give a characterization provided that~$k$ and~$\ell$ are coprime.
For such sequences,
the general case looks more intricate.
Indeed,
as \cref{ex: aut VS reg} already shows,
the period lengths of sequences in the $k$-kernel are not always equal to $\ell$,
but rather divide $\ell$.
Therefore,
a deep understanding of the relationship between the sets $\Per(d)$, where $d$ divides $\ell$,
is required to characterize the possible ranks,
and so in turn the magic and muggle numbers within these families of sequences.
Hence,
as a natural pursuance of this study,
we are currently working on the generalization of the results for automatic and regular sequences to  
the case where no condition on~$k$ and~$\ell$ is specified.

\subsubsection{Acknowledgments}

Savinien Kreczman and Manon Stipulanti are supported by the FNRS Research grants 1.A.789.23F and 1.B.397.20F respectively.

\newpage

\bibliographystyle{splncs04}
\bibliography{biblio}

\end{document}